  \def\cup{\cupprod}
  \def\cap{\capprod}
  \let\widetilde\wtilde
	\def\llangle{\left\langle\mkern-4mu\left\langle}
	\def\rrangle{\right\rangle\mkern-4mu\right\rangle}
  \def\parens##1{\PARENS{\; \rule[-1cm]{0pt}{2.2cm} ##1\;}}
	\def\llangle{\left\langle\mkern-9mu\left\langle}
	\def\rrangle{\right\rangle\mkern-9mu\right\rangle}
	\def\parens##1{\left( ##1 \right)}
	\def\ccases##1{\begin{cases}##1\end{cases}}
\newtheorem{Theorem}{Theorem}[section]
\newtheorem{Corollary}[Theorem]{Corollary}
\newtheorem{Lemma}[Theorem]{Lemma}
\theoremstyle{definition}
\newtheorem{Def}[Theorem]{Definition}
\theoremstyle{remark}
\newtheorem{Remark}[Theorem]{Remark}
\newtheorem{myrems}[Theorem]{Remarks}
\newenvironment{Remarks}{\begin{myrems}\begin{nummer}}%
    {\end{nummer}\end{myrems}}
\newcounter{numcount}
\newcommand{\labelnummer}{(\roman{numcount})}%
\providecommand{\showkeyslabelformat}[1]{\relax}        
\let\mysaveformat\showkeyslabelformat                   %
\def\myformat#1{\raisebox{-1.5ex}{\mysaveformat{#1}}}   %
\newenvironment{nummer}%
  {\let\curlabelspeicher\@currentlabel%
    \begin{list}{\textup{\labelnummer}}%
      {\usecounter{numcount}\leftmargin0pt%
        \topsep0.5ex\partopsep2ex\parsep0pt\itemsep0ex\@plus1\p@%
        \labelwidth2.5em\itemindent3.5em\labelsep1em%
      }%
    \let\saveitem\item%
    \def\item{\saveitem%
      \def\@currentlabel{\curlabelspeicher\kern.1em\labelnummer}}%
    \let\savelabel\label%
    \def\label##1{{\ifnum\thenumcount=1\let\showkeyslabelformat\myformat\fi\savelabel{##1}}%
										{\def\@currentlabel{\labelnummer}%
									 	\let\showkeyslabelformat\@gobble
									 	\savelabel{##1item}%
										}%
	   							}%
  }{\end{list}}%
\newenvironment{indentnummer}%
  {\let\curlabelspeicher\@currentlabel%
    \begin{list}{\textup{\labelnummer}}%
      {\usecounter{numcount}\leftmargin0pt%
        \topsep0.5ex\partopsep2ex\parsep0pt\itemsep0ex\@plus1\p@%
        \labelwidth2.5em\itemindent0em\labelsep1em%
        \leftmargin2.5em}%
    \let\saveitem\item%
    \def\item{\saveitem%
      \def\@currentlabel{\curlabelspeicher\kern.1em\labelnummer}}%
    \let\savelabel\label%
    \def\label##1{{\ifnum\thenumcount=1\let\showkeyslabelformat\myformat\fi\savelabel{##1}}%
										{\def\@currentlabel{\labelnummer}%
									 	\let\showkeyslabelformat\@gobble
									 	\savelabel{##1item}%
										}%
    							}%
  }{\end{list}}%
\def\itemref#1{\ref{#1item}}
\def\@cite#1#2{{%
  \m@th\upshape
  	[{#1\if@tempswa, #2\fi}]}}
\newcommand{\HN}{H_{\Lambda}^N}
\newcommand{\HD}{H_{\Lambda}^D}
\newcommand{{\HHM}}{$\begin{pmatrix} \HN & B\\ B& -\HD \end{pmatrix}$}
\newcommand{\Hh}{\mathds{H}}
\newcommand{\la}{\lambda}
\newcommand{\be}{\beta}
\newcommand{\nn}{\nonumber}
\newcommand{\one}{\mathbf{1}}
\def\11{\mathds{1}}
\def\bbGamma{\mathrm{I}\mkern-3.5mu\Gamma} 
\newcommand{\EE}{\mathbb E}
\newcommand{\NN}{\mathbb N}
\newcommand{\PP}{\mathbb P}
\newcommand{\RR}{\mathbb R}
\newcommand{\Zd}{\mathbb{Z}^{d}}
\renewcommand{\d}{\mathrm d}
\DeclareMathOperator{\e}{e}
\DeclareMathOperator{\supp}{supp}
\DeclareMathOperator{\dist}{dist}
\DeclareMathOperator{\tr}{tr}
\let\emptyset\varnothing
\numberwithin{equation}{section}
\begin{document}
%

\title{Localization for random block operators}

\author{Martin Gebert}
\address{Mathematisches Institut, Universit\"at M\"unchen,
	Theresienstra\ss{}e 39, 80333 M\"unchen, Germany}
\email{gebert@math.lmu.de}

\author{Peter M\"uller}
\address{Mathematisches Institut, Universit\"at M\"unchen,
	Theresienstra\ss{}e 39, 80333 M\"unchen, Germany}
\email{mueller@lmu.de} 

\thanks{Version of 9 August 2012}

\begin{abstract}
	We continue the investigations of Kirsch, Metzger and the second-named author [J.\ Stat.\ Phys.\ \textbf{143}, 1035--1054 (2011)] on spectral properties of a certain type of random block operators.
	In particular, we establish an alternative version of a Wegner estimate and an improved result on Lifschitz tails at the internal band edges. Using these ingredients and the bootstrap multi-scale analysis, 
	we also prove dynamical localization in a neighbourhood of the internal band edges.
\end{abstract}

\maketitle

%
\section{The model and its basic properties}
%

Random block operators arise in several different fields of Theoretical Physics. In this paper we are concerned with those that are relevant to mesoscopic disordered systems such as dirty superconductors. In this context, block operators are used to describe quasi-particle excitations within the self-consistent Bogoliubov-de Gennes equations. It turns out that such block operators fall in 10 different symmetry classes \cite{PhysRevB.55.1142}. As in the previous paper \cite{RandomBlockoperators}, we will consider one particular symmetry class, class $C1$, and refer to \cite{RandomBlockoperators} for further discussions and motivations.


\noindent
Given some Hilbert space $\mathcal{H}$, we write $\mathcal L(\mathcal H)$ for the Banach space of all bounded linear operators from $\mathcal H$ into itself. In this paper we are concerned with the Hilbert space $\mathcal H^{2} := \ell^2(\mathbb Z^d)\oplus \ell^2(\mathbb Z^d)$, the direct sum of two Hilbert spaces of complex-valued, square-summable sequences indexed by the $d$-dimensional integers $\Zd$.
We also fix a probability space $(\Omega,\mathcal F, \mathbb P)$ with corresponding expectation denoted by $\EE$.


\begin{Def}\label{Def:Hh}
	In this paper a \emph{random block operator} is an operator-valued random variable
	\begin{equation}
		\label{Hh-eq}
		\Hh: 
		\begin{array}{l}
 			\Omega \longrightarrow \mathcal L(\mathcal{H}^{2}) \\[1ex]
			\omega \longmapsto \Hh^\omega:= \begin{pmatrix} H^\omega & B^\omega \\ 
			                                                B^\omega &-H^\omega \end{pmatrix}
	 	\end{array}
	\end{equation}
	with the following three properties:
	\begin{indentnummer}
	\item \label{diagonal}
		For $\mathbb{P}$-a.e.\ $\omega\in\Omega$ the operator $H^\omega :=H_0+V^\omega \in 
		\mathcal{L}\big(l^{2}(\mathbb{Z}^{d})\big)$ is the discrete random Schr\"odinger operator 
		of the Anderson model. More precisely, $H_0$ stands for the negative discrete 
		Laplacian on $\mathbb{Z}^{d}$, which is defined by 
		\begin{equation}
			(H_0 \psi)(n):=-\sum_{m\in\Zd: \; |m-n|=1}\left[\psi(m)-\psi(n)\right]
		\end{equation} 
		for every	$\psi\in \ell^2(\mathbb Z^d)$ and every $n \in \mathbb{Z}^{d}$. 
		We always stick to the 1-norm $|n| := \sum_{j=1}^{d}|n_{j}|$ of 
		$n=(n_{1}, \ldots, n_{d})\in\Zd$.

		The random potential is induced by a given family 
		$(\omega \mapsto V^\omega_{n})_{n\in\mathbb Z^d}$ of i.i.d.\ real-valued random 
		variables on $\Omega$ with single-site measure $\mu_V$ of compact support in $\RR$. 
    Thus, the multiplication operator given by 
		\begin{equation}
			(V^\omega\psi)(n) := V^\omega_{n}\psi(n)
		\end{equation}
 		for every $\psi\in \ell^2(\mathbb Z^d)$ and every $n \in \mathbb{Z}^{d}$ is well-defined 
		and bounded for $\mathbb{P}$-a.e.\ $\omega\in\Omega$. Also, $H^\omega$ is self-adjoint 
		and bounded for $\mathbb{P}$-a.e.\ $\omega\in\Omega$.
	\item	\label{offdiagonal}
		For $\mathbb{P}$-a.e.\ $\omega\in\Omega$ the operator 
		$B^\omega \in \mathcal{L}\big(\ell^{2}(\mathbb{Z}^{d})\big)$ 
		is the multiplication operator induced by the family 
		$(\omega \mapsto B^\omega_{n})_{n\in\mathbb Z^d}$ 
		of i.i.d.\ real-valued random variables on $\Omega$ with single-site measure $\mu_B$ 
		of compact support in $\RR$.
	\item The family of random variables $(V_{n})_{n\in\mathbb Z^d}$ 
		is independent of the family $(B_{n})_{n\in\mathbb Z^d}$.
	\end{indentnummer}
\end{Def}

\begin{Remarks}
 	\item
	 	Conditions \itemref{diagonal} and \itemref{offdiagonal} in Definition~\ref{Def:Hh} imply that the random 
		block operator $\Hh$ is $\mathbb P$-a.s.\ self-adjoint and bounded.
	\item \label{symm-spec}
		Block	operators of the form \eqref{Hh-eq} have a spectrum that is symmetric around $0$, 
		i.e.\ $E\in\RR$ belongs to the spectrum $\sigma(\Hh^{\omega})$, if and only if this is 
		also true for $-E$ \cite[Lemma~2.3]{RandomBlockoperators}.
	\item
		The random block operator $\Hh$ is ergodic with respect to $\Zd$-translations, 
		see \cite{RandomBlockoperators} for more details. Therefore, standard results imply 
		the existence of a non-random closed set $\Sigma$ such that $\sigma(\Hh)=\Sigma$ 
		holds $\mathbb P$-a.s.\ 
		\cite{springerlink:10.1007/3-540-51783-9_23, Kirsch:Invitation, Carmona, PasturFigotin}.
		This non-randomness also extends to the components in the Lebesgue decomposition of the spectrum.
\end{Remarks}


\noindent
In order to count eigenvalues we introduce a restriction of random block operators to bounded regions of space $\Zd$.
Given $L >0$ we write $\Lambda_{L} := ]-L/2,L/2[^{d} \cap \Zd$ for the discrete cube of ``length $L$''
about the origin and $\Lambda_{L}(n):= n+ \Lambda_{L}$ for its shifted copy with centre $n\in \Zd$. 

\begin{Def}
	\label{Def:HhL}
 	Given a cube $\Lambda_{L} \subset\Zd$, we define the finite-volume Hilbert space
	$\mathcal{H}^{2}_{L} := \ell^2(\Lambda_{L})\oplus \ell^2(\Lambda_{L})$ and the 
	\emph{finite-volume random block operator}
	\begin{equation}
		\Hh_{\Lambda_{L}} \equiv \Hh_{L}: 
		\begin{array}{l}
 			\Omega \longrightarrow \mathcal L(\mathcal{H}^{2}_{L}) \\[1ex]
			\omega \longmapsto \Hh_{L}^{\omega}:= \begin{pmatrix} H_{L}^{\omega} & B^\omega \\ 
			                                                B^\omega &-H_{L}^{\omega} \end{pmatrix}
	 	\end{array},
	\end{equation}
	where $H_L:=H_{0,L}+V$ and $H_{0,L}$ is the discrete Laplacian on $\Lambda_{L}$ with simple boundary conditions. Its matrix entries are given by $H_{0,L}(n,m) := \langle\delta_n,H_0\delta_m\rangle$ for $n,m\in\Lambda_L$, with $(\delta_n)_{n\in\Zd}$ denoting the canonical basis and $\langle\boldsymbol\cdot, \boldsymbol\cdot\rangle$ the canonical scalar product of $\ell^{2}(\Zd)$ . The random multiplication operators $V$ and $B$ are restricted to $\ell^{2}(\Lambda_{L})$ in the canonical way.
\end{Def}

\begin{Remarks} 
 	\item The operator $\Hh_{L}^{\omega}$ is well-defined, bounded and self-adjoint for 
		$\PP$-a.e.\ $\omega\in\Omega$.
	\item Simple boundary conditions are sufficient for most of our purposes here. We refer to 
		\cite{RandomBlockoperators} for other useful restrictions of such types of block operators.
\end{Remarks}

\noindent
We write $|M|$ for the cardinality of a finite set $M$ and introduce the normalized 
\emph{finite-volume eigenvalue counting function}
\begin{equation}
	\mathds N_{\Hh_{L}}(E) := \frac{1}{2|\Lambda_L|} \, \big| \sigma(\Hh_L) \;\cap\; ]-\infty,E] \,\big|
                  = \frac{1}{2|\Lambda_L|} \tr_{\mathcal H_L^2}\big[ 1_{]-\infty,E]}(\Hh_L)\big],
\end{equation}
which is a non-negative random variable for every $E\in\RR$. Here, $1_{G}$ stands 
for the indicator function of a set $G$ and $\tr_{\mathcal{H}}$ for the trace over some Hilbert space 
$\mathcal{H}$. 
The existence and self-averaging of the macroscopic limit of $\mathds N_{\Hh_{L}}(E)$ is also a consequence of ergodicity. 

\begin{Lemma}[\protect{\cite[Lemma~4.8]{RandomBlockoperators}}]
 	There exists a (non-random) right-continuous probability distribution function $\mathds N:\mathbb R\rightarrow[0,1]$, the \emph{integrated density of states} of $\Hh$, and a measurable subset $\Omega_{0} \subseteq \Omega$ of full measure, $\PP(\Omega_{0})=1$, such that
\begin{equation}
\mathds N(E)=\lim\limits_{L\rightarrow\infty} \mathds N_{\Hh_{L}}^{\omega}(E)=\lim\limits_{L\rightarrow\infty} \mathbb E\left[\mathds N_{\Hh_{L}}(E)\right]
\end{equation}
holds for every $\omega\in\Omega_{0}$ and every continuity point $E\in\mathbb R$ of $\mathds N$.
\end{Lemma}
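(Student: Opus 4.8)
The plan is to prove this along the classical route for ergodic lattice operators: for each fixed energy $E\in\RR$ one exhibits the counting numbers $2|\Lambda_L|\,\mathds N_{\Hh_L}(E)$ as the values on cubes of an (approximately) superadditive, translation-covariant stochastic process over boxes in $\Zd$, applies a superadditive ergodic theorem of Akcoglu--Krengel type (see e.g.\ \cite{Carmona,PasturFigotin}), and then upgrades the resulting $E$-wise almost sure convergence to simultaneous convergence at all continuity points by a Glivenko--Cantelli argument. To set things up, for an arbitrary box $\Lambda\subset\Zd$ let $\Hh_\Lambda^\omega$ be defined exactly as in Definition~\ref{Def:HhL}, with simple boundary conditions, and put $\xi_\Lambda^\omega:=\tr_{\ell^2(\Lambda)\oplus\ell^2(\Lambda)}\big[1_{]-\infty,E]}(\Hh_\Lambda^\omega)\big]$, so that $\xi_{\Lambda_L}^\omega=2|\Lambda_L|\,\mathds N_{\Hh_L}^\omega(E)$. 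Since $H_0$ involves only nearest-neighbour hopping while $V^\omega$ and $B^\omega$ are multiplication operators, $\Hh_{\Lambda_L(n)}^\omega$ is unitarily equivalent, through the lattice shift, to $\Hh_{\Lambda_L}^{T_n\omega}$, where $(T_n)_{n\in\Zd}$ is the ergodic group of translations underlying the ergodicity of $\Hh$ recorded above; in particular $\xi_{\Lambda(n)}^\omega=\xi_\Lambda^{T_n\omega}$, which supplies the required covariance.

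The key deterministic ingredient is approximate superadditivity. If a box $\Lambda$ is partitioned into disjoint sub-boxes $\Lambda_1,\dots,\Lambda_k$, then $\bigoplus_{j=1}^k\Hh_{\Lambda_j}^\omega$ differs from $\Hh_\Lambda^\omega$ only through the hopping terms of $H_0$ across the internal interfaces; this is a self-adjoint perturbation of rank at most $C_d\sum_{j}|\partial\Lambda_j|$, where $C_d$ depends only on $d$. It is precisely here that the block structure costs nothing: the pairing operator $B^\omega$ is diagonal in position space and therefore never couples distinct sub-boxes. Combined with the elementary interlacing bound $\big|\tr\,1_{]-\infty,E]}(A)-\tr\,1_{]-\infty,E]}(A')\big|\le\operatorname{rank}(A-A')$ for bounded self-adjoint $A,A'$, this yields $\big|\xi_\Lambda^\omega-\sum_{j}\xi_{\Lambda_j}^\omega\big|\le C_d\sum_{j}|\partial\Lambda_j|$, uniformly in $\omega$. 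After the standard geometric correction — replacing $\xi_\Lambda$ by $\xi_\Lambda$ minus an additive surface-order term — one obtains a genuinely superadditive, bounded, $\Zd$-covariant process.

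The superadditive ergodic theorem now produces a non-random number $\mathds N(E)=\lim_{L\to\infty}\frac{1}{2|\Lambda_L|}\,\EE\big[\xi_{\Lambda_L}\big]$ together with a full-measure event on which $\mathds N_{\Hh_L}^\omega(E)\to\mathds N(E)$; the surface correction is immaterial because $|\partial\Lambda_L|/|\Lambda_L|\to 0$, and the convergence of the expectations then also holds (alternatively by dominated convergence, since $0\le\mathds N_{\Hh_L}\le1$). Because the single-site measures $\mu_V$ and $\mu_B$ have compact support, $\|\Hh_L^\omega\|$ is bounded uniformly in $L$ and in $\omega$, so $\mathds N_{\Hh_L}^\omega$ vanishes below a fixed energy and equals $1$ above another; hence $\mathds N$ extends to a non-decreasing $[0,1]$-valued function on $\RR$ with $\mathds N(-\infty)=0$ and $\mathds N(+\infty)=1$.

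It remains to dispose of the $E$-dependence of the exceptional event. Running the above for every $E$ in a countable dense set $D\subset\RR$ and intersecting the countably many full-measure events gives $\Omega_0$ with $\PP(\Omega_0)=1$. For any continuity point $E$ of $\mathds N$ and any $\omega\in\Omega_0$, the monotonicity in $E$ of both $\mathds N_{\Hh_L}^\omega$ and $\mathds N$, together with the density of $D$, forces $\mathds N_{\Hh_L}^\omega(E)\to\mathds N(E)$ by the usual sandwiching; the same sandwiching shows that $\mathds N$, taken right-continuous, is a genuine probability distribution function. I expect the only real obstacle to lie in the second paragraph — pinning the decoupling error down as a surface-order rank perturbation uniformly in $\omega$, and arranging the geometry so that the corrected process exactly meets the hypotheses of the superadditive ergodic theorem. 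Once one observes that the pairing term $B^\omega$ contributes no inter-box coupling, the argument proceeds verbatim as for the scalar Anderson model.
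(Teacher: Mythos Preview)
The paper does not actually prove this lemma; it merely quotes it from \cite[Lemma~4.8]{RandomBlockoperators} without reproducing the argument. Your proposal follows the standard route for ergodic lattice operators --- approximate superadditivity from a finite-rank decoupling estimate, the Akcoglu--Krengel superadditive ergodic theorem, and a Glivenko--Cantelli sandwich to pass from a countable dense set of energies to all continuity points --- which is exactly the method in the references \cite{Carmona,PasturFigotin,Kirsch:Invitation} that the paper cites for background, and presumably also in \cite{RandomBlockoperators}. The one block-operator-specific observation, that $B^\omega$ is a multiplication operator and hence contributes no inter-box coupling (so the decoupling perturbation is $\bbGamma_{\Lambda_j}=\Gamma_{\Lambda_j}\oplus(-\Gamma_{\Lambda_j})$ of surface-order rank), is correctly identified and handled. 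Your argument is sound.
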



\noindent
Since $\sigma(\Hh)=\Sigma$ holds $\PP$-a.s., one can ask for the precise location of this almost-sure spectrum. A partial answer is given by 

\begin{Lemma}[\protect{\cite[Lemma~4.3]{RandomBlockoperators}}]
\label{Lemma:Spectrum1}
	Consider the random block operator $\Hh$ of Definition~\ref{Def:Hh}. Then we have $\mathbb P$-a.s.\
\begin{equation}
	\textstyle
	\Big\{\pm\sqrt{E^2+\beta^2}: E\in \sigma(H), \beta\in \supp(\mu_B)\Big\} 
	\subseteq \sigma(\Hh)\subseteq [-r,r],
\end{equation}
where $r:=\sup_{E\in \sigma(H)}|E|+ \sup_{\beta\in\supp(\mu_B)}|\beta|$.
\end{Lemma}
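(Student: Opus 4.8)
The plan is to establish the two inclusions separately. For the upper bound $\sigma(\Hh)\subseteq[-r,r]$, the key observation is that $\Hh^\omega$ is a bounded self-adjoint operator, so its spectrum is contained in $[-\|\Hh^\omega\|,\|\Hh^\omega\|]$. One then estimates the operator norm by a $2\times 2$ block argument: for $\psi=(\psi_1,\psi_2)\in\mathcal H^2$ one has $\|\Hh^\omega\psi\|^2 \le 2\big(\|H^\omega\psi_1\|^2+\|B^\omega\psi_2\|^2+\|B^\omega\psi_1\|^2+\|H^\omega\psi_2\|^2\big)$, but a cleaner route is to bound $\Hh^\omega$ directly by a sum of two commuting self-adjoint pieces or simply to use $\|\Hh^\omega\|\le\big\|\left(\begin{smallmatrix}H^\omega&0\\0&-H^\omega\end{smallmatrix}\right)\big\| + \big\|\left(\begin{smallmatrix}0&B^\omega\\B^\omega&0\end{smallmatrix}\right)\big\| = \|H^\omega\| + \|B^\omega\|$. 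Since $V^\omega$ and $B^\omega$ are multiplication operators by sequences taking values $\PP$-a.s.\ in $\supp(\mu_V)$ and $\supp(\mu_B)$, we get $\|B^\omega\|\le\sup_{\beta\in\supp(\mu_B)}|\beta|$ and $\|H^\omega\|\le\sup_{E\in\sigma(H)}|E|$ (here using that $\sigma(H)$ is the $\PP$-a.s.\ spectrum of the Anderson model, so $\|H^\omega\|$ equals $\sup_{E\in\sigma(H)}|E|$ almost surely). This yields $r$ as defined.

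For the lower bound, fix $E\in\sigma(H)$ and $\beta\in\supp(\mu_B)$; I want to show $\pm\sqrt{E^2+\beta^2}\in\sigma(\Hh)$ $\PP$-a.s. By the symmetry of the spectrum (Remark on $\sigma(\Hh)$ being symmetric about $0$, i.e.\ \cite[Lemma~2.3]{RandomBlockoperators}), it suffices to produce $+\sqrt{E^2+\beta^2}$. The natural strategy is a Weyl-sequence construction. Since $E\in\sigma(H)$, there is a (random, but the spectrum is deterministic) sequence of normalized $\psi_k\in\ell^2(\Zd)$ with $\|(H^\omega-E)\psi_k\|\to 0$; moreover, because $\sigma(H)$ is deterministic and the Anderson Hamiltonian is ergodic, on a set of full measure one can arrange such approximate eigenvectors localized in regions where, in addition, the i.i.d.\ variables $B^\omega_n$ are all close to the prescribed value $\beta\in\supp(\mu_B)$. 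Concretely, for any $\epsilon>0$ the event that $B^\omega_n\in(\beta-\epsilon,\beta+\epsilon)$ for all $n$ in a given large box has positive probability, hence by ergodicity (a Borel–Cantelli / translation argument) such boxes occur $\PP$-a.s.\ arbitrarily far out, and one chooses the Weyl sequence for $H^\omega-E$ supported there. On such a vector $B^\omega$ acts approximately as multiplication by the scalar $\beta$.

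With $\psi_k$ supported in a box where $\|(B^\omega-\beta)1_{\text{box}}\|<\epsilon_k\to 0$ and $\|(H^\omega-E)\psi_k\|\to 0$, consider the trial vector $\Psi_k := (a\psi_k, b\psi_k)\in\mathcal H^2$ with scalar coefficients $a,b$ to be chosen. Then
\begin{equation}
	\Hh^\omega\Psi_k \approx \begin{pmatrix} aE\psi_k + b\beta\psi_k \\ a\beta\psi_k - bE\psi_k\end{pmatrix},
\end{equation}
up to an error of norm $o(1)$ as $k\to\infty$. Choosing $(a,b)$ to be a (normalized) eigenvector of the $2\times2$ matrix $\left(\begin{smallmatrix}E&\beta\\\beta&-E\end{smallmatrix}\right)$, whose eigenvalues are exactly $\pm\sqrt{E^2+\beta^2}$, makes $\Psi_k$ an approximate eigenvector of $\Hh^\omega$ with approximate eigenvalue $+\sqrt{E^2+\beta^2}$; since $\|\Psi_k\|=1$ this shows $\sqrt{E^2+\beta^2}\in\sigma(\Hh^\omega)=\Sigma$ $\PP$-a.s. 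Finally, one upgrades from individual pairs $(E,\beta)$ to the whole set by taking a countable dense subset of $\sigma(H)\times\supp(\mu_B)$ and using that $\Sigma$ is closed and that $(E,\beta)\mapsto\sqrt{E^2+\beta^2}$ is continuous. The main obstacle is the ergodic-theoretic step: making precise that one can simultaneously have an approximate eigenvector of $H^\omega$ at energy $E$ and near-constant off-diagonal entries $\approx\beta$ on the same far-away box, with full probability and uniformly enough to pass to the limit; this is where one invokes ergodicity of $\Hh$ together with independence of the $V$- and $B$-families from Definition~\ref{Def:Hh}\,\itemref{offdiagonal} and the last condition.
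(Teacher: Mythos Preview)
The paper does not give a proof of this lemma; it is quoted verbatim from \cite[Lemma~4.3]{RandomBlockoperators}, so there is no argument here to compare against. Your approach is the standard one for such statements and is essentially correct: the upper bound via the triangle inequality $\|\Hh^\omega\|\le\|H^\omega\|+\|B^\omega\|$ is immediate, and the lower bound via a Weyl-sequence construction combined with the $2\times 2$ diagonalisation of $\left(\begin{smallmatrix}E&\beta\\\beta&-E\end{smallmatrix}\right)$ is exactly how the original reference proceeds.

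One point deserves tightening. You first produce Weyl vectors $\psi_k$ for $H^\omega$ at energy $E$ and then try to relocate them into boxes where $B^\omega\approx\beta$; as stated this is circular, since the location of a Weyl sequence for a \emph{given} realisation $\omega$ is not at your disposal. The clean way (which you gesture at in your final paragraph) is to work the other way round: use that $\sigma(H)$ is characterised by the existence of finite \emph{potential configurations} on a box $\Lambda$ admitting a compactly supported approximate eigenvector at energy $E$; then the joint event ``$V^\omega|_\Lambda$ is close to such a configuration \emph{and} $B^\omega|_\Lambda$ is close to $\beta$'' has positive probability by the independence assumption in Definition~\ref{Def:Hh}, and ergodicity places a translate of this event somewhere in $\Zd$ almost surely. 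With this reordering your argument is complete.
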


\noindent
We say that an interval $]a_{1}, a_{2}[$, where $a_{1}, a_{2} \in\RR$ with $a_{1} < a_{2}$, 
is a \emph{spectral gap} of a self-adjoint operator $A$, if 
$]a_{1}, a_{2}[ \;\cap \;\sigma(A) = \varnothing$ and $a_{1},a_{2} \in\sigma(A)$. 
In order to determine the spectral gap of $\Hh$, we will combine the above lemma with a deterministic result.

\begin{Lemma}[\protect{\cite[Prop.~2.10]{RandomBlockoperators}}]
\label{Lemma:Unitary}     
	Consider the random block operator $\Hh$ of Definition~\ref{Def:Hh}. Then we have for $\PP$-a.a.\ $\omega\in\Omega$:
	\begin{nummer}
	\item If there exists $\lambda \ge 0$ such that $\inf\supp \mu_{V} \ge \lambda$, then
		\begin{equation}
			\sigma (\Hh^{\omega}) \;\cap\; ]-\lambda,\lambda[ \;=\emptyset.
		\end{equation}
	\item  If there exists $\beta \ge 0$ such that $\inf\supp \mu_{B} \ge \beta$, then
		\begin{equation}
			\sigma (\Hh^{\omega})  \;\cap\;   ]-\beta,\beta[ \;=\emptyset.
		\end{equation}
	\item
		\label{doublegap} 
		If there exists $\lambda,\beta \ge  0$ such that $\inf\supp \mu_{V} \ge \lambda$ 
		and $\inf\supp \mu_{B} \ge \beta$, then
		\begin{equation}
			\textstyle
			\sigma(\Hh^{\omega})  \;\cap\;  ]-\sqrt{\lambda^2+\beta^2}, \sqrt{\lambda^2+\beta^2}[ \; =\emptyset.
		\end{equation}
	\end{nummer}
\end{Lemma}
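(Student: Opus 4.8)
The plan is to reduce all three assertions to the single operator inequality $(\Hh^\omega)^2\ge c^2\,\mathbf 1$ on $\mathcal H^2$ and to prove that inequality by an elementary Cauchy--Schwarz estimate for quadratic forms. Since $\Hh^\omega$ is bounded and self-adjoint for $\PP$-a.e.\ $\omega$, the spectral mapping theorem gives $\sigma\bigl((\Hh^\omega)^2\bigr)=\{E^2:E\in\sigma(\Hh^\omega)\}$, so for $c\ge0$ the inequality $(\Hh^\omega)^2\ge c^2\,\mathbf 1$ is equivalent to $\sigma(\Hh^\omega)\cap\,]{-c},c[\,=\emptyset$; and by self-adjointness it is in turn equivalent to $\|\Hh^\omega\psi\|^2\ge c^2\|\psi\|^2$ for all $\psi\in\mathcal H^2$. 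Hence it suffices to verify this last estimate with $c=\lambda$ in case (i), with $c=\beta$ in case (ii), and with $c=\sqrt{\lambda^2+\beta^2}$ in case (iii) (the sub-cases with vanishing constant being vacuous).

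To that end I would write $\psi=(u,v)\in\mathcal H^2$, so that $\Hh^\omega\psi=(H^\omega u+B^\omega v,\;B^\omega u-H^\omega v)$ and $\|\Hh^\omega\psi\|^2=\|H^\omega u+B^\omega v\|^2+\|B^\omega u-H^\omega v\|^2$. For arbitrary real $\alpha,\gamma\ge0$ I would apply the Cauchy--Schwarz inequality in $\mathcal H\oplus\mathcal H$ to this vector and to the test vector $(\alpha u+\gamma v,\;\gamma u-\alpha v)$: its squared norm equals $(\alpha^2+\gamma^2)\|\psi\|^2$ because the mixed terms $\pm2\alpha\gamma\,\mathrm{Re}\langle u,v\rangle$ cancel, and — using that $H^\omega$ and $B^\omega$ are self-adjoint, which makes $\langle u,B^\omega v\rangle-\langle v,B^\omega u\rangle$ and $\langle v,H^\omega u\rangle-\langle u,H^\omega v\rangle$ purely imaginary — the real part of the inner product equals precisely $\alpha\bigl(\langle u,H^\omega u\rangle+\langle v,H^\omega v\rangle\bigr)+\gamma\bigl(\langle u,B^\omega u\rangle+\langle v,B^\omega v\rangle\bigr)$. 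Therefore
\[
  (\alpha^2+\gamma^2)\,\|\psi\|^2\,\|\Hh^\omega\psi\|^2\;\ge\;\bigl[\alpha\langle u,H^\omega u\rangle+\alpha\langle v,H^\omega v\rangle+\gamma\langle u,B^\omega u\rangle+\gamma\langle v,B^\omega v\rangle\bigr]^2 .
\]
The one point that needs care is that $H^\omega$ and $B^\omega$ do not commute in general ($[H_0,B^\omega]\neq0$), so one cannot simply add the squares $H^{\omega\,2}$ and $B^{\omega\,2}$; the device above circumvents this because it only ever uses the diagonal quadratic forms $\langle\psi,H^\omega\psi\rangle$ and $\langle\psi,B^\omega\psi\rangle$, which are insensitive to the commutator.

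Finally I would insert the hypotheses. For $\PP$-a.e.\ $\omega$ one has $V^\omega_n\in\supp\mu_V$ and $B^\omega_n\in\supp\mu_B$ for every $n$, so that (together with $H_0\ge0$) $H^\omega\ge\lambda\,\mathbf 1\ge0$ under the assumption of (i) or (iii), and $B^\omega\ge\beta\,\mathbf 1\ge0$ under the assumption of (ii) or (iii). In case (iii) one chooses $(\alpha,\gamma)=(\lambda,\beta)$, bounds the bracket below by $(\lambda^2+\beta^2)\|\psi\|^2$, and concludes $\|\Hh^\omega\psi\|^2\ge(\lambda^2+\beta^2)\|\psi\|^2$; in case (i) one chooses $(\alpha,\gamma)=(1,0)$, so that the $B^\omega$-terms drop out and only $H^\omega\ge\lambda$ is used; in case (ii) one chooses $(\alpha,\gamma)=(0,1)$ and uses only $B^\omega\ge\beta$. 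An essentially equivalent, more structural route would be to observe first that $\Hh^\omega$ is unitarily equivalent to the off-diagonal (``chiral'') operator $\left(\begin{smallmatrix}0 & H^\omega-iB^\omega\\ H^\omega+iB^\omega & 0\end{smallmatrix}\right)$, whence $(\Hh^\omega)^2$ is unitarily equivalent to $\mathrm{diag}\bigl((H^\omega-iB^\omega)(H^\omega+iB^\omega),\,(H^\omega+iB^\omega)(H^\omega-iB^\omega)\bigr)$, and then to use $\|(H^\omega\pm iB^\omega)\psi\|\ge|\langle\psi,(H^\omega\pm iB^\omega)\psi\rangle|/\|\psi\|=\bigl(\langle\psi,H^\omega\psi\rangle^2+\langle\psi,B^\omega\psi\rangle^2\bigr)^{1/2}/\|\psi\|$, which delivers the same lower bounds.
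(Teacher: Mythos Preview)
The paper does not supply its own proof of this lemma; it is quoted verbatim from \cite[Prop.~2.10]{RandomBlockoperators}, so there is nothing in the present text to compare your argument against. Your proof is correct: the Cauchy--Schwarz trick with the test vector $(\alpha u+\gamma v,\,\gamma u-\alpha v)$ cleanly isolates the diagonal quadratic forms $\langle\,\cdot\,,H^\omega\,\cdot\,\rangle$ and $\langle\,\cdot\,,B^\omega\,\cdot\,\rangle$ without ever requiring $[H^\omega,B^\omega]=0$, and the specializations $(\alpha,\gamma)=(\lambda,\beta)$, $(1,0)$, $(0,1)$ deliver the three cases. The ``chiral'' alternative you sketch at the end --- conjugating $\Hh^\omega$ to an off-diagonal form $\bigl(\begin{smallmatrix}0 & A^*\\ A & 0\end{smallmatrix}\bigr)$ with $A=H^\omega+iB^\omega$ and then bounding $\|A\psi\|$ from below via $|\langle\psi,A\psi\rangle|$ --- is equivalent and, judging from the paper's internal label \texttt{Lemma:Unitary}, is presumably the route taken in the cited reference.
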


\begin{Remark} \label{Lemma:Endpoints}
	Lemmas~\ref{Lemma:Spectrum1} and~\ref{Lemma:Unitary} together provide the following two statements.
	\begin{nummer}
	\item If $\lambda := \inf\supp \mu_{V} >0$ and $0\in \supp\mu_{B}$, 
		then $]-\lambda, \lambda[$ is $\PP$-a.s.\ a spectral gap of $\Hh$ around 0.  
	\item If $\lambda := \inf\supp \mu_{V} \ge 0$ and $\beta := \inf\supp\mu_{B} >0$, then 
	\linebreak 
		$]-\sqrt{\lambda^2+\beta^2}, \sqrt{\lambda^2+\beta^2}[$ is $\PP$-a.s.\ 
		a spectral gap of $\Hh$ around 0.
	\end{nummer}
\end{Remark}

\noindent
For completeness and later use we review the main result of \cite{RandomBlockoperators}, which is a Wegner estimate for the operator $\Hh$. In the next section we provide a new variant of this result.  
We write $\left\|f\right\|_{BV}$ for the total variation norm of some function $f:\mathbb R\rightarrow \mathbb R$.

\begin{Theorem}[Wegner estimate \protect{\cite[Thm.~5.1]{RandomBlockoperators}}]
\label{Wegner I} 
	Consider the random block operator $\Hh$ of Definition~\ref{Def:Hh} and assume that 
	at least one of the following conditions is met.
	\begin{enumerate}
	\item[\upshape{(1)}] There exists $\lambda>0$ such that $\inf\supp \mu_{V} \ge \lambda$
		 and $\mu_V$ is absolutely continuous with a piecewise continuous 
		Lebesgue density $\phi_V$ of bounded variation and compact support.
	\item[\upshape{(2)}] There exists $\beta>0$ such that $\inf\supp \mu_{B} \geq \beta$ 
		and $\mu_B$ is absolutely continuous with a piecewise continuous 
		Lebesgue density $\phi_B$ of bounded variation and compact support.
	\end{enumerate}
	Then the integrated density of states $\mathds N$ of $\Hh$ is Lipschitz continuous 
	and has a bounded Lebesgue derivative, the density of states $\mathds D:= \d\mathds N/\d E$. 
	
	Furthermore, if hypothesis \upshape{(1)} holds, then we have for Lebesgue-a.a.\ $E\in\mathbb R$ that 
	\begin{equation}
		\mathds D(E)\leq 2 \frac{|E|+1}{\lambda}\left\|\phi_V\right\|_{BV}.
	\end{equation}
	In case of hypothesis \upshape{(2)}, we get the estimate 
	\begin{equation}
		\mathds D(E)\leq 2 \frac{|E|+1}{\beta}\left\|\phi_B\right\|_{BV}
	\end{equation}
 	for Lebesgue-a.a.\ $E\in\mathbb R$.
\end{Theorem}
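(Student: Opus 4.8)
\noindent The plan is to reduce the assertion to a finite‑volume Wegner estimate and then to prove that estimate by a spectral‑averaging argument in which the usual monotonicity of the single‑site perturbation -- which is \emph{not} available here -- is replaced by a collective monotonicity built into the block structure. I shall carry this out under hypothesis~(1); hypothesis~(2) is handled in exactly the same way, with the diagonal and off‑diagonal blocks interchanging roles -- concretely, $\Sigma_3$ below is replaced by $\Sigma_1:=\begin{pmatrix}0&\mathds{1}\\\mathds{1}&0\end{pmatrix}$, one uses the identity $\Hh_L\Sigma_1+\Sigma_1\Hh_L=2\begin{pmatrix}B&0\\0&B\end{pmatrix}$, and one averages over $B_n$ instead of $V_n$. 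By the convergence lemma for the integrated density of states stated above and the reflection symmetry of the spectrum (Remark~\ref{symm-spec}, which makes $\mathds N$ symmetric and $\mathds D$ even), it suffices to produce a constant $c(E_2)$, independent of $L$, with
\[
  \EE\bigl[\tr_{\mathcal{H}^2_L}1_{]E_1,E_2]}(\Hh_L)\bigr]\;\le\;c(E_2)\,(E_2-E_1)\,\|\phi_V\|_{BV}\,|\Lambda_L|\qquad(0<E_1<E_2);
\]
dividing by $2|\Lambda_L|$, letting $L\to\infty$ along continuity points and then $E_1\uparrow E$ gives $\mathds D(E)\le\tfrac12 c(E)\,\|\phi_V\|_{BV}$ for Lebesgue‑a.a.\ $E>0$; on $]-\lambda,\lambda[$ there is nothing to show since by Lemma~\ref{Lemma:Unitary}\,(i) the spectrum avoids that interval and hence $\mathds D=0$ there; negative energies are covered by evenness; and Lipschitz continuity of $\mathds N$ follows because $\mathds N$ is bounded with compact support.

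\noindent The heart of the matter is the finite‑volume estimate. Put $\Sigma_3:=\begin{pmatrix}\mathds{1}&0\\0&-\mathds{1}\end{pmatrix}$ on $\mathcal{H}^2_L$. A one‑line computation gives the anticommutator identity $\Hh_L\Sigma_3+\Sigma_3\Hh_L=2\begin{pmatrix}H_L&0\\0&H_L\end{pmatrix}$, which is $\ge 2\lambda$ since $H_L=H_{0,L}+V\ge\lambda$ (using $H_{0,L}\ge0$). Sandwiching by $P:=1_{]E_1,E_2]}(\Hh_L)$ and using cyclicity of the trace, $[\Hh_L,P]=0$ and $0<\Hh_L P\le E_2 P$, one obtains $\tr P\le\tfrac1\lambda\tr[\Sigma_3\Hh_L P]\le\tfrac{E_2}{\lambda}\tr[\Sigma_3 P]$; the last inequality uses the same identity applied to the eigenvectors $\Psi_j$ of $\Hh_L$ with eigenvalue $E_j\in\,]E_1,E_2]$, which yields $\langle\Psi_j,\Sigma_3\Psi_j\rangle\ge\lambda/E_j>0$. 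Now write $\Sigma_3=\sum_{n\in\Lambda_L}A_n$ with $A_n:=Q_n^+-Q_n^-$, where $Q_n^+,Q_n^-$ are the rank‑one projections onto $(\delta_n,0)$ and $(0,\delta_n)$; note $A_n=\partial_{V_n}\Hh_L$. Taking expectations, $\EE[\tr P]\le\tfrac{E_2}{\lambda}\sum_{n\in\Lambda_L}\EE[\tr(A_nP)]$.

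\noindent It remains to estimate each $\EE[\tr(A_nP)]$. Freeze all the randomness except $V_n$, so that $\Hh_L=\Hh_L^{(n)}+vA_n$ with $v=V_n$ and $\Hh_L^{(n)}$ deterministic, and set $f_n(v):=\tr\bigl[A_n1_{]E_1,E_2]}(\Hh_L^{(n)}+vA_n)\bigr]$ and $F_n(u):=\int_{-\infty}^u f_n(v)\,\d v$. Differentiating $\tr[g(\Hh_L^{(n)}+vA_n)]$ in $v$ for the Lipschitz function $g(x)=\int_{E_1}^{E_2}1_{\{x\le E\}}\,\d E$ shows that $F_n(u)=\int_{E_1}^{E_2}\bigl(N_{-\infty}(E)-N_u(E)\bigr)\,\d E$, where $N_v(E)$ counts the eigenvalues of $\Hh_L^{(n)}+vA_n$ in $]-\infty,E]$ and $N_{-\infty}:=\lim_{v\to-\infty}N_v$. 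Since $A_n$ has rank two, Cauchy interlacing for the compression to the orthogonal complement of the span of $(\delta_n,0)$ and $(0,\delta_n)$, together with the fact that the two eigenvalues of $\Hh_L^{(n)}+vA_n$ affected by $vA_n$ run off to $\pm\infty$ as $v\to-\infty$, gives $|N_u(E)-N_{-\infty}(E)|\le1$ for all $u,E$; hence $F_n(\pm\infty)=0$ and $\|F_n\|_\infty\le E_2-E_1$. Because $\phi_V$ has bounded variation and compact support, conditioning on everything but $V_n$ and averaging over $V_n$, which has law $\phi_V(v)\,\d v$, one has $\int_\RR f_n(v)\,\phi_V(v)\,\d v=-\int_\RR F_n(v)\,\d\phi_V(v)$; since $\|F_n\|_\infty\le E_2-E_1$ uniformly in the conditioning, $\bigl|\EE[\tr(A_nP)]\bigr|\le(E_2-E_1)\,\|\phi_V\|_{BV}$. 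Summing over the $|\Lambda_L|$ sites and inserting into the bound of the previous paragraph gives the required estimate with $c(E_2)=2E_2/\lambda$, so in fact $\mathds D(E)\le\tfrac{|E|}{2\lambda}\,\|\phi_V\|_{BV}$, a fortiori the stated bound.

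\noindent The only genuinely delicate point -- and the main obstacle -- is that $A_n=\partial_{V_n}\Hh_L$ is \emph{not} sign‑definite: its eigenvalues are $+1$ and $-1$, so the classical Wegner argument, which rests on $\partial_{V_n}\Hh_L\ge0$ together with the Combes--Hislop/Kotani--Simon spectral‑averaging bound, does not apply. The two ingredients above are designed to circumvent this: no single site enters monotonically, but the \emph{total} direction $\sum_{n}A_n=\Sigma_3$ is effectively positive on the positive‑energy spectral subspace -- this is where $\inf\supp\mu_V\ge\lambda>0$ is used -- and for a sign‑indefinite but finite‑rank single‑site perturbation one still controls the \emph{running} spectral average $F_n$ uniformly, which is precisely what lets the integration by parts transfer the bound onto $\|\phi_V\|_{BV}$ rather than $\|\phi_V\|_\infty$ and explains why bounded variation is the natural hypothesis. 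Recognising that one should integrate by parts in the variable $V_n$ and control the antiderivative $F_n$ via rank‑two interlacing is the subtle step; the rest is bookkeeping.
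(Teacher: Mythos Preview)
Your proof is correct and follows essentially the same route as the original argument in \cite{RandomBlockoperators} (which the present paper only cites, not reproves): the anticommutator identity $\Hh_L\Sigma_3+\Sigma_3\Hh_L=2\,\mathrm{diag}(H_L,H_L)\ge2\lambda$ is, via Feynman--Hellmann, exactly the collective monotonicity $\sum_n\partial_{V_n}E_j=\langle\Psi_j,\Sigma_3\Psi_j\rangle\ge\lambda/E_j$ used there, and your rank-two interlacing bound $\|F_n\|_\infty\le E_2-E_1$ together with integration by parts against $\phi_V$ is precisely the content of \cite[Lemma~5.4]{RandomBlockoperators} (compare the proof of Theorem~\ref{Wegner II} above, which uses the same two ingredients). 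Your direct trace packaging, without a switch function, even yields the slightly sharper constant $\mathds D(E)\le|E|\,\|\phi_V\|_{BV}/(2\lambda)$.
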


%
\section{Results}
%

In this section we present the results of this paper. All proofs are deferred to subsequent sections.
We start with a variant of Theorem~\ref{Wegner I}.

\begin{Theorem}[Wegner estimate]
\label{Wegner II}
	Consider the random block operator $\Hh$ of Definition~\ref{Def:Hh} and assume that 
	$\inf\supp \mu_{V} \geq0$ and $\inf\supp\mu_{B}\geq 0$. Assume further that the single-site measures $\mu_V$ and $\mu_B$ 
	are both absolutely continuous 
	with piecewise continuous Lebesgue densities $\phi_V$, $\phi_B$ of bounded variation 
	and compact support. Then the integrated density of states $\mathds N$ is Lipschitz continuous 
	with a bounded Lebesgue derivative $\mathds D = \d\mathds N/\d E$ satisfying
	\begin{equation}
		\left\|\mathds D\right\|_{\infty}\leq 2\big( 
		\left\|\phi_V\right\|_{BV}+\left\|\phi_B\right\|_{BV}\big).
	\end{equation}
\end{Theorem}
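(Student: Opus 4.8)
The strategy is to reduce the Wegner estimate for $\Hh$ to a pair of one-parameter spectral-shift arguments, one for the $V$-variables and one for the $B$-variables, and then average. First I would fix a cube $\Lambda_L$ and a bounded interval $I = ]E_0, E_1[$, and estimate $\EE\big[\tr_{\mathcal H_L^2} 1_I(\Hh_L)\big]$. The first key observation, already exploited in \cite{RandomBlockoperators}, is that $\Hh_L^\omega$ depends monotonically on the diagonal potential values $V_n^\omega$ in a useful way: conjugating $\Hh_L$ by the block-diagonal unitary $\mathrm{diag}(\11,-\11)$ shows that the two diagonal blocks play symmetric roles, and one checks that $\partial \Hh_L / \partial V_n = \mathrm{diag}(e_n, -e_n)$ where $e_n = |\delta_n\rangle\langle\delta_n|$ is the rank-one projection onto site $n$. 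This is not sign-definite, so the bare eigenvalues are not monotone in $V_n$; the point of the block structure in class $C1$ is that one instead gets monotonicity of the \emph{squared} operator or, equivalently, one uses that $B$ enters with a definite sign. Concretely, $\partial \Hh_L/\partial B_n = \mathrm{off\text{-}diag}(e_n,e_n)$, which also is not sign-definite, but $\Hh_L^2$ has a block-diagonal structure whose $(1,1)$-entry is $H_L^2 + B^2$, manifestly monotone increasing in each $B_n^2$ and in each $V_n$ when $\inf\supp\mu_V\ge 0$. So I would work with $\Hh_L^2$ and the interval $I^2$.

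The second step is a spectral-averaging lemma. Writing the expectation over, say, the $V_n$'s as iterated integrals against the densities $\phi_V$, and using the one-dimensional monotonicity of $E\mapsto \tr 1_{]-\infty,E]}(\Hh_L^2 + t e_n \oplus (-e_n) \cdots)$ — more precisely, the standard fact that for a self-adjoint $A$ and a rank-one (or finite-rank) perturbation with density of bounded variation, $\int \d t\, \phi(t)\, \langle \delta, 1_I(A + t P)\delta\rangle \le \|\phi\|_{BV}\, |I|_{\text{eff}}$ — I would integrate out the variables one site at a time. The total-variation norm $\|\phi_V\|_{BV}$ (resp.\ $\|\phi_B\|_{BV}$) appears precisely because the densities are only piecewise continuous of bounded variation rather than Lipschitz, so one integrates by parts and the boundary/jump terms are controlled by $\|\phi\|_{BV}$. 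Summing the $|\Lambda_L|$ contributions from the $V$-sites and the $|\Lambda_L|$ contributions from the $B$-sites, dividing by $2|\Lambda_L|$, and passing to the limit $L\to\infty$ via the IDS lemma quoted above yields
\begin{equation}
	\mathds N(E_1) - \mathds N(E_0) \le \big(\|\phi_V\|_{BV} + \|\phi_B\|_{BV}\big)\, |E_1 - E_0|
\end{equation}
for all $E_0 < E_1$, which gives Lipschitz continuity of $\mathds N$ and the stated bound on $\|\mathds D\|_\infty$ after dividing by $|E_1-E_0|$ and recalling the factor $2$ from the $\frac{1}{2|\Lambda_L|}$ normalization (the two families of $|\Lambda_L|$ sites each contribute, but the trace is over a space of dimension $2|\Lambda_L|$, producing the overall constant $2$).

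The main obstacle is the non-monotonicity of the eigenvalues of $\Hh_L$ in the individual coupling constants: unlike the Anderson model, one cannot directly invoke $\partial \Hh_L/\partial V_n \ge 0$. The heart of the argument is therefore the bookkeeping that converts the sign-indefinite perturbations into a usable estimate — either by working with $\Hh_L^2$ throughout and carefully tracking how the interval $I$ transforms under $E\mapsto E^2$ (which is why $|E|+1$-type factors appeared in Theorem~\ref{Wegner I}, and why the hypothesis $\inf\supp\mu_V\ge 0$, $\inf\supp\mu_B\ge 0$ here lets us avoid them and get a clean constant), or by a direct rank-one analysis of the resolvent $(\Hh_L - z)^{-1}$ exploiting the off-diagonal placement of $B$. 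I expect the cleanest route is the squared-operator one: verify that $\Hh_L^2 = \mathrm{diag}(H_L^2+B^2,\, H_L^2+B^2) + (\text{off-diagonal commutator terms})$ and that the relevant matrix element is monotone, then run spectral averaging on $\Hh_L^2$; checking that the off-diagonal terms do not spoil monotonicity is the delicate point and is presumably where the hypothesis $\inf\supp\mu_V \ge 0$ is really used.
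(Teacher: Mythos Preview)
Your proposal misses the key idea, and the squared-operator route you favour would not, as outlined, yield the energy-uniform bound. First, the claim that the diagonal block $H_L^2 + B^2$ of $\Hh_L^2$ is ``manifestly monotone increasing in each $V_n$'' is false in the operator sense: $\partial (H_L^2)/\partial V_n = e_n H_L + H_L e_n$ is indefinite even when $H_L \ge 0$ (already for a $2\times 2$ positive matrix with nonzero off-diagonal entry). More importantly, spectral averaging on $\Hh_L^2$ and then transforming intervals back via $E \mapsto E^2$ is essentially the mechanism behind Theorem~\ref{Wegner I} from \cite{RandomBlockoperators}, and the Jacobian of that map is exactly where the unwanted factor $|E|+1$ originates; you give no device for bypassing it.

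The paper never squares. Its new ingredient is a \emph{collective} Feynman--Hellmann lower bound (Lemma~\ref{Lemma:....}): for any simple eigenvalue $E(V,B) > 0$ of $\Hh_L$,
\[
\sum_{n\in\Lambda_L}\Big(\frac{\partial}{\partial V_n}+\frac{\partial}{\partial B_n}\Big) E(V,B) \;\ge\; 1,
\]
obtained by writing the left side times $E$ as $\langle\psi_1,H_L\psi_1\rangle+\langle\psi_2,H_L\psi_2\rangle+\langle\psi_1,B\psi_1\rangle+\langle\psi_2,B\psi_2\rangle$ for the normalized eigenvector $\Psi=(\psi_1,\psi_2)$, and then using $H_L\ge 0$, $B\ge 0$ together with the eigenvalue equation to bound this below by $E$. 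This sum-rule replaces the unavailable single-site monotonicity and lets one dominate $\rho'(E_j-\eta)$ directly by $\sum_n(\partial_{V_n}+\partial_{B_n})\rho(E_j-\eta)$ for positive $E_j$, with no detour through $\Hh_L^2$. After taking expectations, each individual $\partial_{V_n}$- or $\partial_{B_n}$-term is handled by the BV spectral-averaging lemma of \cite{RandomBlockoperators} (varying a single $V_n$ or $B_n$ is a rank-$2$ perturbation of $\Hh_L$, whence the factor $2$), the sum over $n$ and over the two families gives $\|\phi_V\|_{BV}+\|\phi_B\|_{BV}$, and symmetry of the spectrum about $0$ extends the estimate from $E>0$ to all of $\RR$.
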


\begin{Remarks}
	\item As compared to the hypotheses of the Wegner estimate from \cite{RandomBlockoperators} 
		in Theorem~\ref{Wegner I}, the above result 
		constitutes an improvement in that neither $H$ nor $B$ have to be bounded away from $0$. The price 
    we have to pay is that \emph{both} operators are required to be non-negative 
    and that \emph{both} probability distributions are assumed to be sufficiently regular.
	\item As compared to the results of Theorem~\ref{Wegner I}, we note that the present 
		Wegner estimate is uniform in energy. 
	\item After completing this work, A.\ Elgart informed us that he can obtain a Wegner estimate for $\Hh$ which does not require assumptions on the supports of $\mu_{V}$ or $\mu_{B}$ \cite{Elg12}. 
\end{Remarks}

\noindent
Next we consider the spectral asymptotics of the integrated density of states $\mathds N$ of $\Hh$ at the internal band edges.

\begin{Theorem}[Internal Lifschitz tails -- upper bound] 
\label{LifschitzII}
	Consider the random block operator $\Hh$ of Definition~\ref{Def:Hh}. Assume that 
	$\lambda:=\inf\supp\mu_{V} \ge 0$ and that the support of the measure $\mu_V$
	consists of more than a single point.
	Assume further that one of the following conditions is met
	\begin{enumerate}
	\item[\upshape{(1)}] \quad $\beta := \inf\supp\mu_{B} \ge 0$,
	\item[\upshape{(2)}] \quad $\beta := \sup\supp\mu_{B} \le 0$,
	\item[\upshape{(3)}] \quad $0\in\supp\mu_{B}$, in which case we set $\beta:=0$.
	\end{enumerate}	  
	Then we have
	\begin{equation}
		\limsup_{\epsilon\searrow0}\frac{\ln\Big| 
		\ln\Big[ \mathds N\big(\sqrt{\lambda^2+\beta^2}+\epsilon\big) 
			- \mathds N\big(\sqrt{\lambda^2+\beta^2}\big) \Big]\Big|}{\ln\epsilon}\leq- \alpha
	\end{equation}
	with $\alpha = d/2$ in all cases except the case 
	$\lambda=0$ and $\beta \neq 0$, where $\alpha = d/4$.
\end{Theorem}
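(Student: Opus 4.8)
The plan is to reduce the upper bound on the internal Lifschitz tail for $\Hh$ to a known Lifschitz-tail estimate for the ordinary Anderson model $H$ at its bottom spectral edge, which without loss of generality we may place at $\inf\sigma(H) = \lambda$ (since $\lambda = \inf\supp\mu_V$ once we translate the random potential so that $H_0$ contributes a nonnegative band; more precisely $\inf\sigma(H)=\lambda$ when $0\in\supp\mu_V$, and a harmless shift arranges this in general). The key algebraic observation, already present in \cite{RandomBlockoperators} and encoded in Lemma~\ref{Lemma:Spectrum1}, is that on the positive half-line the spectrum of $\Hh$ is governed by $\sqrt{H^2 + B^2}$ in a suitable sense: in cases (1)--(3) the off-diagonal block satisfies $B^2 \ge \beta^2$, so $\Hh^2 \ge H^2 + \beta^2$ in the form sense on the appropriate subspace, and hence $|\Hh| \ge \sqrt{H^2+\beta^2}$. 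Since $H\ge\lambda\ge 0$, this gives the operator inequality $1_{[0,\sqrt{\lambda^2+\beta^2}+\epsilon]}(\Hh) \preccurlyeq$ (the spectral projection of $\sqrt{H^2+\beta^2}$ onto $[0,\sqrt{\lambda^2+\beta^2}+\epsilon]$), which by monotonicity of $t\mapsto\sqrt{t^2+\beta^2}$ translates into $H$ being close to its bottom edge $\lambda$: explicitly $\sqrt{H^2+\beta^2}\le\sqrt{\lambda^2+\beta^2}+\epsilon$ forces $H \le \lambda + c(\lambda,\beta)\,\epsilon$ for small $\epsilon$, with $c = \sqrt{\lambda^2+\beta^2}/\lambda$ when $\lambda>0$ and $c$ of order $\epsilon^{-1/2}\cdot$const (i.e.\ the window in $H$ is of size $\sim\sqrt{\epsilon}$) when $\lambda=0$, $\beta\neq 0$. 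This is exactly the source of the two different exponents $d/2$ versus $d/4$.

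The steps I would carry out, in order: \textbf{(i)} Pass to finite volume and bound the finite-volume counting function $\mathds N_{\Hh_L}(\sqrt{\lambda^2+\beta^2}+\epsilon) - \mathds N_{\Hh_L}(\sqrt{\lambda^2+\beta^2})$, i.e.\ the normalized number of eigenvalues of $\Hh_L$ in the little interval above the band edge, by the above operator inequality against the spectral projection of $H_L^2 + \beta^2$ (one must be slightly careful because simple boundary conditions on the $2\times 2$ block need not be a direct sum of boundary conditions on the diagonal — but the relevant Dirichlet-type bracketing from \cite{RandomBlockoperators} is available, cf.\ the restrictions discussed there, and one throws away the off-diagonal coupling at the box boundary at the cost of a harmless surface term that does not affect the double-log asymptotics). \textbf{(ii)} Conclude $\tr 1_{[0,\sqrt{\lambda^2+\beta^2}+\epsilon]}(\Hh_L) \le 2\,\tr 1_{[\lambda,\lambda+c\epsilon']}(H_L)$, where $\epsilon' = \epsilon$ if $\lambda>0$ and $\epsilon' \sim \sqrt{\epsilon}$ if $\lambda=0,\beta\neq0$; the factor $2$ is the block dimension. \textbf{(iii)} Take expectations and $L\to\infty$ using the IDS lemma to get $\mathds N(\sqrt{\lambda^2+\beta^2}+\epsilon) - \mathds N(\sqrt{\lambda^2+\beta^2}) \le N_H(\lambda + c\epsilon') - N_H(\lambda)$, the IDS of the Anderson model $H$ near its bottom edge. \textbf{(iv)} Invoke the classical Lifschitz-tail upper bound for the Anderson model at a fluctuation (band) edge — valid precisely because $\supp\mu_V$ has more than one point, so the edge is a genuine fluctuation edge — which states $\ln[N_H(\lambda+\eta) - N_H(\lambda)] \le -c\,\eta^{-d/2}$ for small $\eta>0$ (see e.g.\ Kirsch's \emph{Invitation}, \cite{Kirsch:Invitation}, and the references there; one may need to check that $\lambda=\inf\supp\mu_V=\inf\sigma(H)$ really is a fluctuation edge, which is standard). \textbf{(v)} Substitute $\eta = c\epsilon'$ and take the iterated logarithm: with $\epsilon'=\epsilon$ this yields $\limsup \ln|\ln[\cdots]|/\ln\epsilon \le -d/2$, and with $\epsilon'\sim\epsilon^{1/2}$ it yields $\le -d/4$, matching the claimed $\alpha$.

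The main obstacle I anticipate is \textbf{step (i)}: making the operator inequality $|\Hh_L| \ge \sqrt{H_L^2 + \beta^2}$ genuinely correct \emph{at finite volume with the chosen boundary conditions}, because the square of the finite-volume block operator is $\Hh_L^2 = \begin{pmatrix} H_L^2 + B^2 & 0 \\ 0 & H_L^2 + B^2\end{pmatrix}$ only when $H_L$ and $B$ commute as operators on $\ell^2(\Lambda_L)$ — which they do, since $B$ is a multiplication operator and $H_L = H_{0,L}+V$ with $H_{0,L}$ the simple-boundary-condition Laplacian, so actually $\Hh_L^2$ \emph{is} exactly block-diagonal with both diagonal entries equal to $H_L^2 + B^2 \ge H_L^2 + \beta^2$. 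Thus $1_{[0,t]}(|\Hh_L|) = 1_{[0,t]}(\sqrt{H_L^2+B^2})\oplus 1_{[0,t]}(\sqrt{H_L^2+B^2}) \preccurlyeq 1_{[0,t]}(\sqrt{H_L^2+\beta^2})^{\oplus 2}$, and the subtlety evaporates — no surface term is even needed, and the argument is clean. What genuinely requires care is only the translation of the sublevel set $\{\sqrt{H_L^2+\beta^2}\le\sqrt{\lambda^2+\beta^2}+\epsilon\}$ into a sublevel set of $H_L$ near $\lambda$: one writes $\sqrt{x^2+\beta^2}\le\sqrt{\lambda^2+\beta^2}+\epsilon$ $\iff$ $x^2 \le \lambda^2 + 2\epsilon\sqrt{\lambda^2+\beta^2} + \epsilon^2$ $\iff$ $0\le x \le \big(\lambda^2 + 2\epsilon\sqrt{\lambda^2+\beta^2}+\epsilon^2\big)^{1/2}$, and then a first-order expansion around $\lambda$ gives the window $c\epsilon$ for $\lambda>0$ and $\sqrt{2\beta\epsilon + \epsilon^2}^{1/2}\sim(2\beta)^{1/2}\epsilon^{1/2}$ for $\lambda=0$. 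Everything downstream is then a direct citation of the Anderson Lifschitz-tail estimate, so apart from bookkeeping the proof is essentially this reduction.
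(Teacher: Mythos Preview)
Your overall strategy --- reduce to the Lifschitz tail of $H$ at its bottom edge via an eigenvalue comparison, then read off the exponent from the change of variables $\sqrt{x^2+\beta^2}\le\sqrt{\lambda^2+\beta^2}+\epsilon$ --- is exactly the paper's. But the step you single out as the main obstacle and then dismiss is in fact a genuine gap.

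You claim that $H_L$ and $B$ commute because $B$ is a multiplication operator. They do not: $H_L=H_{0,L}+V$ contains the discrete Laplacian $H_{0,L}$, which has off-diagonal matrix elements and does \emph{not} commute with a general (non-constant) multiplication operator. Consequently
\[
\Hh_L^2=\begin{pmatrix} H_L^2+B^2 & [H_L,B]\\ -[H_L,B] & H_L^2+B^2\end{pmatrix}
\]
is \emph{not} block diagonal, the cross terms $\pm[H_L,B]$ have no sign, and the operator inequality $|\Hh_L|\ge\sqrt{H_L^2+\beta^2}$ does not follow from squaring. So the ``subtlety'' does not evaporate; the shortcut fails precisely at the point you flagged.

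The paper circumvents this by invoking a min-max-max variational principle for the positive eigenvalues of $\Hh_L$ (a special case of \cite[Thm.~1.9.1]{Tretter:Spectraltheory}): each positive eigenvalue $\lambda_j$ is given by a min-max over $f$ and an independent max over $g$ of
\[
\tfrac12\big(\langle f,H_Lf\rangle-\langle g,H_Lg\rangle\big)+\sqrt{\tfrac14\big(\langle f,H_Lf\rangle+\langle g,H_Lg\rangle\big)^2+|\langle f,Bg\rangle|^2}.
\]
Restricting the inner maximum to $g=f$ gives the lower bound $\lambda_j\ge\sqrt{\langle f,H_Lf\rangle^2+\langle f,Bf\rangle^2}\ge\sqrt{\langle f,H_Lf\rangle^2+\beta^2}$, and then the ordinary min-max for $H_L$ together with the explicit spectrum of $\Hh_L(\beta)$ yields $\lambda_j\ge\mu_j$, the $j$-th positive eigenvalue of $\Hh_L(\beta)$. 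From there your steps (ii)--(v) go through essentially as you wrote. The missing ingredient in your argument is precisely this variational principle (or some equivalent device) that bypasses the non-commutativity of $H_L$ and $B$.
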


\begin{Remarks}
  \item An analogous result holds when approaching the upper edge of the lower band 
  	$ - \sqrt{\lambda^{2}+\beta^{2}}$ from below.
	\item There is no conflict in the definition of $\beta$ in Theorem~\ref{LifschitzII} 
		if several of the conditions (1) -- (3) hold, because this case is only possible with $\beta=0$.  	
  \item If $\lambda>0$ or $\beta \neq 0$, then $\pm\sqrt{\la^2+\be^2}$ are the endpoints of the 
  	spectral gap of $\Hh$; see Remark \ref{Lemma:Endpoints}. 
		To apply this remark in the case (2), use also unitary equivalence of 
		$\big(\begin{smallmatrix} H & B \\ B & -H\end{smallmatrix}\big)$ and 
		$\big(\begin{smallmatrix} H & -B \\ -B & -H\end{smallmatrix}\big)$.
	\item Theorem~\ref{LifschitzII} is a generalization of \cite[Thm.~6.1]{RandomBlockoperators}
		which applies only to $\lambda>0$ and $\beta=0$.
\end{Remarks}

\noindent
A (mostly) complementary lower bound is provided by

\begin{Theorem}[Internal Lifschitz tails -- lower bound] 
\label{Lifschitz-lower}
	Consider the random block operator $\Hh$ of Definition~\ref{Def:Hh}. Assume that 
	$\lambda:=\inf\supp\mu_{V} \ge 0$ and that one of the cases $(1)$ -- $(3)$ in 
	Theorem~\ref{LifschitzII} applies.  
 	Assume further the existence of constants $C, \kappa
	 > 0$ such that for all sufficiently small $\eta>0$ the bounds
	\begin{equation}\label{Lifschitz6}
 		\mu_{V}\big([\lambda, \lambda + \eta[\big) \ge C \eta^{\kappa} \qquad
		\text{and} \qquad
 		\mu_{B}\big(]\beta -\eta, \beta+\eta[ \big) \ge C \eta^{\kappa} 
	\end{equation}
	hold. Then we have
	\begin{equation}
		\liminf_{\epsilon\searrow0}\frac{\ln\Big| 
		\ln\Big[ \mathds N\big(\sqrt{\lambda^2+\beta^2}+\epsilon\big) 
			- \mathds N\big(\sqrt{\lambda^2+\beta^2}\big) \Big]\Big|}{\ln\epsilon} \geq - d/2. 
	\end{equation}
\end{Theorem}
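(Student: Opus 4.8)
The plan is to follow the classical route to Lifschitz‑tail lower bounds — exhibit a cheap ``optimal fluctuation'' of the random parameters on a box, use it to produce an eigenvalue near the band edge, and count such boxes — with the block structure handled by passing to the square of the operator. Write $E_{*}:=\sqrt{\la^{2}+\be^{2}}$; I would treat the genuine case $E_{*}>0$ (when $\la=\be=0$ the point $0$ lies in the interior of $\Sigma$ and is not an internal band edge, and the assertion is then of a different, easier nature). First I would normalize: using the unitary equivalence of $\big(\begin{smallmatrix} H & B \\ B & -H\end{smallmatrix}\big)$ and $\big(\begin{smallmatrix} H & -B \\ -B & -H\end{smallmatrix}\big)$, reduce case (2) to case (1); then the deterministic bound underlying Lemma~\ref{Lemma:Unitary}, applied to the finite‑volume operator, shows that $\Hh_{L}^{\omega}$ has the spectral gap $]-E_{*},E_{*}[$ for $\PP$‑a.e.\ $\omega$, while $U\Hh_{L}^{\omega}U^{*}=-\Hh_{L}^{\omega}$ for the unitary $U=\big(\begin{smallmatrix} 0 & 1 \\ -1 & 0\end{smallmatrix}\big)$ on $\mathcal H_{L}^{2}$ shows that $\sigma(\Hh_{L}^{\omega})$ is symmetric about $0$. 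Combining this with the symmetry of the density of states and the (standard) fact that $\mathds N$ has no atom at $E_{*}$ gives $\mathds N(E_{*})=\tfrac12$ and
\begin{equation}
	\mathds N(E_{*}+\epsilon)-\mathds N(E_{*})=\lim_{L\to\infty}\frac{1}{2|\Lambda_{L}|}\,\EE\Big[\,\big|\,\sigma(\Hh_{L})\cap[E_{*},E_{*}+\epsilon]\,\big|\,\Big],
\end{equation}
so it suffices to bound the finite‑volume count from below. The key observation is that, by the gap and the spectral symmetry,
\begin{equation}
	\big|\,\sigma(\Hh_{L})\cap[E_{*},E_{*}+\epsilon]\,\big|=\tfrac12\big|\,\sigma(\Hh_{L}^{2})\cap[0,(E_{*}+\epsilon)^{2}]\,\big|\ge\tfrac12\dim\mathcal M
\end{equation}
for every subspace $\mathcal M\subseteq\mathcal H_{L}^{2}$ on which $\langle\Psi,\Hh_{L}^{2}\Psi\rangle\le(E_{*}+\epsilon)^{2}\|\Psi\|^{2}$; squaring converts the question about the lower edge of the upper band — below which the spectrum runs off — into one about the bottom of the non‑negative operator $\Hh_{L}^{2}$, where a Dirichlet variational estimate applies.

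Next I would construct $\mathcal M$ from local trial states. Put $\ell:=\lceil c_{0}\,\epsilon^{-1/2}\rceil$, with a constant $c_{0}=c_{0}(\la,\be,d)$ fixed below, and $\eta:=\epsilon^{2}$. Choose $N_{L}\ge c\,|\Lambda_{L}|\,\ell^{-d}$ centres so that the boxes $\Lambda_{\ell}(x_{1}),\dots,\Lambda_{\ell}(x_{N_{L}})\subseteq\Lambda_{L}$ are pairwise at distance at least $3$, and let $\mathcal A_{j}$ be the event that $V_{n}\in[\la,\la+\eta]$ and $B_{n}\in\,]\be-\eta,\be+\eta[$ for every $n\in\Lambda_{\ell}(x_{j})$. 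On $\mathcal A_{j}$ set $\Psi_{j}:=(\phi_{0}^{(j)},0)\in\mathcal H_{L}^{2}$, where $\phi_{0}^{(j)}$ is the normalized ground state of the discrete Laplacian on $\Lambda_{\ell}(x_{j})$ with simple boundary conditions, with eigenvalue $e_{0}(\ell)\asymp\ell^{-2}$, extended by $0$. Since $\phi_{0}^{(j)}$ is supported in $\Lambda_{\ell}(x_{j})$, on $\mathcal A_{j}$
\begin{equation}
	\langle\Psi_{j},\Hh_{L}^{2}\Psi_{j}\rangle=\big\|H_{L}\phi_{0}^{(j)}\big\|^{2}+\big\|B\phi_{0}^{(j)}\big\|^{2}\le\big(e_{0}(\ell)+\la+\eta\big)^{2}+\big\|(1-\one_{\Lambda_{\ell}(x_{j})})H_{0}\phi_{0}^{(j)}\big\|^{2}+(\be+\eta)^{2},
\end{equation}
and the boundary ``leakage'' $\big\|(1-\one_{\Lambda_{\ell}(x_{j})})H_{0}\phi_{0}^{(j)}\big\|^{2}$ is of order $\ell^{-3}$; the right‑hand side equals $E_{*}^{2}+2\la\,e_{0}(\ell)+\mathcal O(\eta+e_{0}(\ell)^{2}+\ell^{-3})$. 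Choosing $c_{0}$ large enough that $2\la\,e_{0}(\ell)\le\tfrac12E_{*}\epsilon$ and then $\epsilon$ small makes this $\le E_{*}^{2}+2E_{*}\epsilon\le(E_{*}+\epsilon)^{2}$. As the $\Psi_{j}$ have disjoint supports and $\Hh_{L}^{2}$ has range at most $2$, the linear span $\mathcal M$ of the $\Psi_{j}$ with $\mathcal A_{j}$ occurring satisfies the required quadratic‑form bound, and $\dim\mathcal M$ equals the number of indices $j$ with $\mathcal A_{j}$ occurring.

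Finally, taking expectations and using that the $\mathcal A_{j}$ are equiprobable so that $\EE[\dim\mathcal M]=N_{L}\PP(\mathcal A_{1})$, together with the i.i.d.\ structure of $(V_{n})_{n}$ and $(B_{n})_{n}$, their independence, and \eqref{Lifschitz6} (which gives $\PP(\mathcal A_{1})\ge(C\eta^{\kappa})^{2|\Lambda_{\ell}|}\ge\exp(-C_{1}\ell^{d}|\ln\eta|)$), one obtains
\begin{equation}
	\EE\Big[\,\big|\,\sigma(\Hh_{L})\cap[E_{*},E_{*}+\epsilon]\,\big|\,\Big]\ge\tfrac12 N_{L}\,\PP(\mathcal A_{1})\ge\tfrac{c}{2}\,|\Lambda_{L}|\,\ell^{-d}\,\PP(\mathcal A_{1}),
\end{equation}
hence, dividing by $2|\Lambda_{L}|$, letting $L\to\infty$, and inserting $\ell\asymp\epsilon^{-1/2}$, $\eta=\epsilon^{2}$,
\begin{equation}
	\mathds N(E_{*}+\epsilon)-\mathds N(E_{*})\ge\tfrac{c}{4}\,\ell^{-d}\,(C\eta^{\kappa})^{2|\Lambda_{\ell}|}\ge\exp\!\big(-C_{2}\,\epsilon^{-d/2}\,|\ln\epsilon|\big)
\end{equation}
for all small $\epsilon>0$. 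Taking the logarithm, then the absolute value of the logarithm, then dividing by $\ln\epsilon<0$ — using $|\ln\epsilon|/\ln\epsilon=-1$ and $\ln|\ln\epsilon|/\ln\epsilon\to0$ as $\epsilon\searrow0$ — yields $\liminf_{\epsilon\searrow0}\frac{\ln\big|\ln[\mathds N(E_{*}+\epsilon)-\mathds N(E_{*})]\big|}{\ln\epsilon}\ge-\tfrac{d}{2}$.

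The conceptually essential step is the reduction to $\Hh_{L}^{2}$, made exact by the deterministic spectral gap of Lemma~\ref{Lemma:Unitary} and the symmetry $U\Hh_{L}U^{*}=-\Hh_{L}$. The hard part is the variational estimate: all error terms must be kept below the budget $\sim2E_{*}\epsilon$, and in particular one must control the boundary leakage $\asymp\ell^{-3}$ of the extended sine ground state — it is this leakage that forces the generous length scale $\ell\asymp\epsilon^{-1/2}$, and hence the exponent $d/2$ rather than the optimal $d/4$ when $\la=0$, which is precisely why the bound is only ``mostly'' complementary to Theorem~\ref{LifschitzII}.
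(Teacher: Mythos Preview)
Your argument for the principal case $E_{*}=\sqrt{\lambda^{2}+\beta^{2}}>0$ is correct and takes a genuinely different route from the paper. The paper works with the Dirichlet--Neumann bracketing operator $\Hh_{L}^{+}$ and invokes the min-max-max principle (Lemma~\ref{min-max-max}) to obtain the \emph{linear} upper bound
\[
	\lambda_{1}(\Hh_{L}^{+})\le E_{*}+\langle\psi,\widetilde H_{L}^{D}\psi\rangle+\langle\psi,\widetilde B^{2}\psi\rangle^{1/2},
\]
then feeds in a single pyramid test function on one box of size $L\asymp\epsilon^{-1/2}$ and uses the bracketing inequality $\mathds N\ge\EE[\mathds N_{\Hh_{L}^{+}}]$. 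You instead pass to $\Hh_{L}^{2}$, use the ordinary min-max principle, and count many disjoint sub-boxes; this is more elementary (no block min-max-max, no bracketing), while the paper's route keeps the estimate linear in the Laplacian and thereby avoids the boundary leakage altogether.

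There is, however, a genuine gap in your treatment of $E_{*}=0$. You dismiss $\lambda=\beta=0$ as ``of a different, easier nature'', but the theorem (and the subsequent Remark) explicitly covers it, and the paper's proof handles it uniformly. Your squaring trick actually \emph{fails} here: the budget becomes $(E_{*}+\epsilon)^{2}=\epsilon^{2}$, whereas the leakage term $\|(1-\one_{\Lambda_{\ell}})H_{0}\phi_{0}\|^{2}\asymp\ell^{-3}\asymp\epsilon^{3/2}\gg\epsilon^{2}$ cannot be absorbed at the scale $\ell\asymp\epsilon^{-1/2}$ (and taking $\ell$ larger only worsens the exponent). The paper circumvents this precisely because its bound on $\lambda_{1}(\Hh_{L}^{+})$ involves the \emph{quadratic form} $\langle\psi,H_{L}^{D}\psi\rangle\asymp L^{-2}$, not the squared norm $\|H_{L}\psi\|^{2}$; no leakage term ever appears. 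So to close this case you would need a separate argument---either fall back on the min-max-max principle, or produce a different trial state.

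A minor point: in your final paragraph, when $\lambda>0$ it is the term $2\lambda e_{0}(\ell)$, not the leakage $\ell^{-3}$, that pins down $\ell\asymp\epsilon^{-1/2}$; when $\lambda=0$ and $\beta\neq0$ the leakage only forces $\ell\gtrsim\epsilon^{-1/3}$, so your method would actually yield the stronger exponent $d/3$ there. This does not affect the validity of the stated $-d/2$ bound.
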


\begin{Remarks}
\item 
	Taken together, Theorems~\ref{LifschitzII} and~\ref{Lifschitz-lower} imply that the 
	random block operator $\Hh$ exhibits Lifschitz tails at the edges of its spectral gap with 
	Lifschitz exponent $d/2$ for all values $\lambda >0$ and $\beta \in\RR$.
\item 
	Even in the case $\lambda = \beta =0$, the block operator $\Hh$ exhibits Lifschitz 
	tails with Lifschitz exponent $d/2$ at energy zero. We note that there is no internal spectral edge 
	at energy zero in this case.
\item
 	In the case $\lambda=0$ and $\beta \neq 0$ we believe that the correct value of the 
	Lifschitz exponent is $d/2$ (rather than $d/4$), as given by the lower bound in Theorem~\ref{Lifschitz-lower}.
\end{Remarks}

\noindent
Finally, we turn to Anderson localization of $\Hh$ in a neighbourhood of the internal band edges.
The following notion will be useful for the formulation of the result.

\begin{Def}
	\label{Def:matrix element}
	Given a bounded operator $\mathds A$ on the Hilbert space $\mathcal{H}^{2}$ and $n,m \in\Zd$, we 
	introduce its $2\times2$-matrix-valued matrix element
 	\begin{equation}
		\label{matrix-element}
 		\mathds{A}(n,m) := 
		\parens{\begin{matrix}  
					\llangle \begin{pmatrix} \delta_n \\ 0 \end{pmatrix}, 
					\mathds A
					\begin{pmatrix} \delta_m \\ 0 \end{pmatrix} \rrangle 
					&
					\llangle \begin{pmatrix} \delta_n \\ 0 \end{pmatrix}, 
					\mathds A
					\begin{pmatrix} 0 \\ \delta_m \end{pmatrix}\rrangle  
					\\[3ex] 
					\llangle \begin{pmatrix} 0 \\ \delta_n \end{pmatrix},
					\mathds A
					\begin{pmatrix} \delta_m \\ 0 \end{pmatrix}\rrangle
					&
					\llangle \begin{pmatrix} 0 \\ \delta_n \end{pmatrix},
					\mathds A
					\begin{pmatrix} 0 \\ \delta_m \end{pmatrix}\rrangle
		\end{matrix}}.
	\end{equation}
	Here $\langle\mkern-4mu\langle \boldsymbol{\cdot},\boldsymbol{\cdot}\rangle\mkern-4mu\rangle$ stands 
	for the canonical scalar product on the Hilbert space $\mathcal{H}^{2}$. We also fix some  
	norm $\|\boldsymbol\cdot\|_{2\times2}$ on the vector space of complex-valued 
	$2\times2$-matrices.
\end{Def}

\begin{Theorem}[Complete localization]
\label{Anderson Local}
	Consider the random block operator $\Hh$ of Definition~\ref{Def:Hh} and assume the hypotheses of 
	Theorem~\ref{LifschitzII}. Assume further the hypotheses of Theorem~\ref{Wegner I} or  Theorem~\ref{Wegner II}.
Then there exist constants $0< \zeta <1$, $C_{\zeta}>0$ and an energy interval $I:=[-a,a]$, where  $a>0$, such that 
$I \cap \sigma(\Hh) \neq \varnothing$ holds $\PP$-a.s.\ and 
\begin{equation}
	\label{eq:dyn-loc}
	\EE \left( \sup_{\|f\|_\infty\leq 1} \big\| \big( 1_I(\Hh)f(\Hh)\big)(n,m)\big\|_{2\times2} \right) 
	\le C_{\zeta} \e^{-|n-m|^\zeta}
\end{equation}
for all $n,m \in\Zd$. The supremum in \eqref{eq:dyn-loc} is taken over all Borel functions 
$\RR \rightarrow\mathbb C$ that are pointwise bounded by $1$.
\end{Theorem}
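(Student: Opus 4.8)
The plan is to run the bootstrap multi-scale analysis of Germinet and Klein for $\Hh$, viewed as a bounded, nearest-neighbour-range, $\Zd$-ergodic operator on $\ell^2(\Zd,\mathbb C^2)\cong\mathcal H^2$, and to read off \eqref{eq:dyn-loc} from the resulting region of complete localization. Write $E_\star:=\sqrt{\la^2+\be^2}$ for the relevant internal band edge, with $E_\star=0$ allowed in the case $\la=\be=0$. Two probabilistic inputs are needed, and both are available under the stated hypotheses. The first is a finite-volume Wegner estimate
\begin{equation}
	\EE\bigl[\tr_{\mathcal H_L^2} 1_{[E-\eta,E+\eta]}(\Hh_L)\bigr]\le C\,\eta\,|\Lambda_L|,
\end{equation}
uniformly in $L$ and $E\in\RR$; this is precisely the bound that is established, as an intermediate step towards Lipschitz continuity of $\mathds N$, in the proofs of Theorem~\ref{Wegner I} in \cite{RandomBlockoperators} and of Theorem~\ref{Wegner II} above. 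Via Chebyshev's inequality it supplies both the ``number of eigenvalues'' bound and the estimate $\PP\bigl(\dist(E,\sigma(\Hh_L))\le\eta\bigr)\le C\,\eta\,|\Lambda_L|$ that the multi-scale scheme requires.

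The second input is an initial length-scale estimate near $E_\star$, which I would obtain from the internal Lifschitz tails of Theorem~\ref{LifschitzII}. That theorem gives, for every $\alpha'\in\,]0,\alpha[$ and all small $\epsilon>0$, the bound $\mathds N(E_\star+\epsilon)-\mathds N(E_\star)\le\e^{-\epsilon^{-\alpha'}}$; using the spectral gap when $E_\star>0$, or the spectral symmetry of $\Hh$ about $0$ when $E_\star=0$, the same bound holds for the two-sided increment $\mathds N(E_\star+\epsilon)-\mathds N(E_\star-\epsilon)$. Passing to finite volume via the convergence of $\EE[\mathds N_{\Hh_L}]$ together with a boundary-layer estimate (as in the proof of Theorem~\ref{LifschitzII}, cf.\ \cite{RandomBlockoperators}), one obtains $\EE\bigl[\tr_{\mathcal H_L^2} 1_{[E_\star-\epsilon,E_\star+\epsilon]}(\Hh_L)\bigr]\le C\,|\Lambda_L|\,\e^{-\epsilon^{-\alpha''}}$ for a slightly smaller exponent $\alpha''>0$, in a joint regime of small $\epsilon$ and large $L$ that accommodates the choice $\epsilon=\epsilon_L:=L^{-1/2}$. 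For this choice the right-hand side equals $C\,L^d\,\e^{-L^{\alpha''/2}}$, which decays faster than any inverse power of $L$, so by Chebyshev
\begin{equation}
	\PP\bigl(\dist(E_\star,\sigma(\Hh_L))\le\epsilon_L\bigr)\le L^{-q}
\end{equation}
for every $q>0$ and all large $L$. On the complementary event, and for every $E$ with $|E-E_\star|\le\epsilon_L/2$, a Combes--Thomas estimate for the bounded, finite-range block operator $\Hh_L$ (the fibre $\mathbb C^2$ affecting only the constants) shows that $(\Hh_L-E)^{-1}$ decays off-diagonally at a rate proportional to $\epsilon_L$. Although this rate $\sim L^{-1/2}$ tends to zero, the product $\epsilon_L\,L\sim L^{1/2}$ diverges; hence, fixing a single sufficiently large scale $L_0$ and putting $I_0:=[E_\star,E_\star+\epsilon_{L_0}/2]$, one obtains the initial length-scale hypothesis of the bootstrap multi-scale analysis, uniformly for $E\in I_0$ and with a failure probability $L_0^{-q}$ as small as required.

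The remaining hypotheses of the bootstrap multi-scale analysis are structural and hold in the present block setting essentially as in the scalar case: the geometric resolvent (``strong localization'') inequality and the eigenfunction-decay inequality follow from the nearest-neighbour range of $H_0$ and the diagonal form of $B$; independence at a distance holds because $\Hh_L^\omega$ depends only on $\{V_n^\omega,B_n^\omega:n\in\Lambda_L\}$ and the single-site variables are i.i.d.\ with the two families mutually independent; the ``number of eigenvalues'' bound is contained in the Wegner estimate above; and the strong generalized eigenfunction expansion follows from boundedness of $\Hh$ and the fact that $n\mapsto(1+|n|)^{-s}$ induces a Hilbert--Schmidt operator on $\mathcal H^2$ for $s>d/2$. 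Germinet and Klein's theorem then places $I_0$ in the region of complete localization of $\Hh$, which in particular yields
\begin{equation}
	\EE\Bigl(\sup_{\|f\|_\infty\le1}\bigl\|\bigl(1_{I_0}(\Hh)f(\Hh)\bigr)(n,m)\bigr\|_{2\times2}\Bigr)\le C_\zeta\,\e^{-|n-m|^\zeta}
\end{equation}
for all $n,m\in\Zd$ and some $\zeta\in\,]0,1[$ (any norm $\|\cdot\|_{2\times2}$ being equivalent to the one induced by the scalar product on $\mathcal H^2$).

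To pass to the symmetric interval $I=[-a,a]$ of the statement, I would transfer this bound to $[-E_\star-\epsilon_{L_0}/2,-E_\star]$ by the spectral symmetry of $\Hh$ about $0$ (Remark~\ref{symm-spec}, together with the unitary equivalence of $\big(\begin{smallmatrix}H&B\\B&-H\end{smallmatrix}\big)$ and $\big(\begin{smallmatrix}H&-B\\-B&-H\end{smallmatrix}\big)$), observe that $1_{]-E_\star,E_\star[}(\Hh^\omega)=0$ for $\PP$-a.e.\ $\omega$ when $E_\star>0$ by Remark~\ref{Lemma:Endpoints} (there being nothing to add when $E_\star=0$), and set $a:=E_\star+\epsilon_{L_0}/2$; the almost-sure non-emptiness $I\cap\sigma(\Hh)\neq\varnothing$ then holds because $\pm E_\star\in\Sigma$ by Lemma~\ref{Lemma:Spectrum1}. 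The step I expect to be the main obstacle is the initial length-scale estimate: one has to carry out the finite-volume reduction of the Lifschitz-tail bound with the simple boundary conditions used here, and then verify that the band-edge Combes--Thomas rate, which degenerates with the small distance $\epsilon_L$, still meets the multi-scale starting condition along a suitable sequence of scales --- which works precisely because the Lifschitz exponent $\alpha$ is strictly positive, so the tail is super-polynomially small and there is room to let $\epsilon_L\to0$ polynomially in $1/L$.
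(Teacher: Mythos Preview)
Your overall plan---verify the structural hypotheses of the Germinet--Klein bootstrap multi-scale analysis, feed in the finite-volume Wegner bound, and supply an initial length-scale estimate near the internal band edge via a Lifschitz-tail mechanism combined with Combes--Thomas---coincides with the paper's. The paper checks \emph{SLI}, \emph{EDI} and the Combes--Thomas bound explicitly (Lemmas~\ref{sli}, \ref{edi} and the subsequent Combes--Thomas lemma), and invokes the finite-volume Wegner inequality \eqref{wegner-loc} exactly as you suggest.

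The genuine difference is in how the initial-scale estimate is obtained. You propose to start from the \emph{infinite-volume} Lifschitz-tail bound of Theorem~\ref{LifschitzII} and then transfer it back to finite volume via convergence of $\EE[\mathds N_{\Hh_L}]$ plus a boundary-layer correction; you rightly flag this reverse passage as the delicate step. The paper bypasses it entirely: in Theorem~\ref{Initial-scale} it works at finite volume from the start, reusing the eigenvalue comparison $\lambda_j(\Hh_L)\ge\lambda_j(\Hh_L(\beta))$ from \eqref{Lifschitz:Label} (the same inequality that drives the proof of Theorem~\ref{LifschitzII}), then Lemma~\ref{Lemma:H beta} to reduce $\PP\bigl(\inf\sigma(|\Hh_L|)\le E_\star+2L^{-1/2}\bigr)$ to $\PP\bigl(\inf\sigma(H_L)\le\lambda+CL^{-1/2}\bigr)$, and finally the known finite-volume Lifschitz-tail estimate for the Anderson model $H_L$ (Lemma~\ref{Lemma:Initial}). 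This also gives the initial estimate directly for \emph{all} $E\in[-a_L,a_L]$ in one stroke, so the separate symmetrisation step you outline for passing from $I_0=[E_\star,E_\star+\epsilon_{L_0}/2]$ to $[-a,a]$ is not needed. Your route is workable, but the paper's is both shorter and avoids the infinite-to-finite transfer you identify as the main obstacle.
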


\begin{Remark}
	\begin{nummer}
 	\item The choice of the matrix norm $\|\boldsymbol\cdot\|_{2\times2}$ is not crucial here. 
		It can be replaced by any other matrix norm on the space of $2\times 2$ matrices. 
	\item Our proof of the theorem relies on the bootstrap multi-scale analysis of Germinet and Klein 
		\cite{GeKl01}. In fact, the general formulation of the bootstrap multi-scale analysis in \cite{GeKl01}
		allows an immediate and straightforward application to the present setting of random block 
		operators. An alternative proof of localization has been carried out previously in \cite{ElShSo12}.
		It adapts the fractional-moment method to rather general $k\times k$-block operators for 
		$k\ge 2$ and applies in the strong-disorder regime. We would like to advertise the simplicity 
		of extending the bootstrap multi-scale analysis to our block-operator setting.
	\item Further equivalent characterizations of the region of complete localization can be found in 
		\cite{GeKl04, GeKl06}. 
	\end{nummer}
\end{Remark}

\noindent
The RAGE Theorem leads to the following well-known corollary of Theorem \ref{Anderson Local}.

\begin{Corollary}[Spectral localization]
	Under the assumptions of Theorem \ref{Anderson Local} 
	there is only pure point spectrum in $I$, that is
	\begin{equation}
		\sigma(\Hh)\cap I=\sigma_{\mathrm{pp}}(\Hh)\cap I
	\end{equation}
	holds $\PP$-a.s., and the eigenfunctions of $\Hh$ associated with eigenvalues in $I$ decay exponentially 
  at infinity.
\end{Corollary}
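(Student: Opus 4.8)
\emph{Proof strategy.} The plan is to deduce the corollary from the dynamical bound \eqref{eq:dyn-loc} by the standard RAGE-theorem argument; the only input beyond \eqref{eq:dyn-loc} is its almost-sure counterpart with summable spatial tails. First I would pass from the averaged estimate to an almost-sure one. For $n,m\in\Zd$ put
\begin{equation*}
	Y_{n,m}:=(1+|m|)^{-(d+1)}\,\e^{|n-m|^{\zeta}/2}\,
	\sup_{\|f\|_{\infty}\le1}\big\|\big(1_{I}(\Hh)f(\Hh)\big)(n,m)\big\|_{2\times2}.
\end{equation*}
Then \eqref{eq:dyn-loc} gives $\EE\big[\sum_{n,m\in\Zd}Y_{n,m}\big]\le C_{\zeta}\sum_{m\in\Zd}(1+|m|)^{-(d+1)}\sum_{n\in\Zd}\e^{-|n-m|^{\zeta}/2}<\infty$, so there is a set $\Omega_{1}$ with $\PP(\Omega_{1})=1$ on which $\sum_{n,m}Y_{n,m}<\infty$ and $\Hh^{\omega}$ is bounded self-adjoint. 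Hence for each $\omega\in\Omega_{1}$ there is a finite $C_{\omega}$ with
\begin{equation}\label{eq:sudec}
	\sup_{\|f\|_{\infty}\le1}\big\|\big(1_{I}(\Hh^{\omega})f(\Hh^{\omega})\big)(n,m)\big\|_{2\times2}
	\le C_{\omega}\,(1+|m|)^{d+1}\,\e^{-|n-m|^{\zeta}/2}\qquad\text{for all }n,m\in\Zd .
\end{equation}

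Next I would fix $\omega\in\Omega_{1}$ and apply \eqref{eq:sudec} to the functions $f_{t}(x):=1_{I}(x)\,\e^{-\mathrm i tx}$, $t\in\RR$, which satisfy $\|f_{t}\|_{\infty}\le1$ and $\e^{-\mathrm i t\Hh^{\omega}}1_{I}(\Hh^{\omega})=f_{t}(\Hh^{\omega})$. This shows that for every canonical basis vector $\phi\in\{(\delta_{m},0),(0,\delta_{m}):m\in\Zd\}$ of $\mathcal H^{2}$ the orbit $\{\e^{-\mathrm i t\Hh^{\omega}}1_{I}(\Hh^{\omega})\phi:t\in\RR\}$ has, uniformly in $t$, at most $\mathrm{const}\cdot C_{\omega}^{2}(1+|m|)^{2(d+1)}\sum_{n\notin\Lambda_{L}}\e^{-|n-m|^{\zeta}}$ of its mass outside the cube $\Lambda_{L}$, which tends to $0$ as $L\to\infty$. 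By the elementary compactness criterion in $\ell^{2}$ the orbit is therefore precompact in $\mathcal H^{2}$, and the RAGE theorem then forces $1_{I}(\Hh^{\omega})\phi\in\mathcal H^{2}_{\mathrm{pp}}(\Hh^{\omega})$. Since the closed linear span of these vectors is all of $\operatorname{ran}1_{I}(\Hh^{\omega})$ and $\mathcal H^{2}_{\mathrm{pp}}(\Hh^{\omega})$ is closed, I get $\operatorname{ran}1_{I}(\Hh^{\omega})\subseteq\mathcal H^{2}_{\mathrm{pp}}(\Hh^{\omega})$, i.e.\ $1_{I}(\Hh^{\omega})$ annihilates the continuous spectral subspace of $\Hh^{\omega}$. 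This yields $\sigma(\Hh^{\omega})\cap I=\sigma_{\mathrm{pp}}(\Hh^{\omega})\cap I$ for every $\omega\in\Omega_{1}$, hence $\PP$-a.s.; the endpoints of $I$ cause no trouble since one may first run the argument on a slightly larger interval, on which Theorem~\ref{Anderson Local} also provides \eqref{eq:dyn-loc}.

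For the decay of eigenfunctions I would use that, $\PP$-a.s., the eigenvalues of $\Hh^{\omega}$ in $I$ are simple (otherwise one argues with an orthonormal basis of the eigenspace). If $\Hh^{\omega}\psi_{E}=E\psi_{E}$ with $E\in I$ and $\|\psi_{E}\|=1$, the eigenprojection $1_{\{E\}}(\Hh^{\omega})$ has matrix element at $(n,m)$ whose $\|\cdot\|_{2\times2}$-norm is comparable to $|\psi_{E}(n)|\,|\psi_{E}(m)|$, where $|\psi_{E}(n)|$ denotes the Euclidean length of the $\mathbb C^{2}$-vector that $\psi_{E}$ has at site $n$. Plugging $f=1_{\{E\}}$ into \eqref{eq:sudec} and choosing $m=x_{E}\in\Zd$ to be a point at which $|\psi_{E}(\cdot)|$ attains its (strictly positive) maximum yields $|\psi_{E}(n)|\le C_{\omega}'\,\e^{-|n-x_{E}|^{\zeta}/2}$ with $C_{\omega}':=\mathrm{const}\cdot C_{\omega}(1+|x_{E}|)^{d+1}|\psi_{E}(x_{E})|^{-1}$; thus each eigenfunction is semi-uniformly localized about its localization centre $x_{E}$ and decays to $0$ faster than any polynomial. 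Genuine exponential decay, if wanted, comes instead from the exponential off-diagonal decay of finite-volume Green functions produced along the multi-scale analysis.

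The routine part is the RAGE argument. The step needing a little care is \eqref{eq:sudec}: the weight $(1+|m|)^{d+1}$ is needed because \eqref{eq:dyn-loc} is not by itself summable over all pairs $(n,m)$, and only a first-moment argument is available since \eqref{eq:dyn-loc} controls an expectation. Apart from the localization-centre trick for the eigenfunctions, I expect no genuine obstacle: \eqref{eq:dyn-loc} was stated precisely in the summable-uniform-decay-of-eigenfunction-correlations form for which this passage is textbook.
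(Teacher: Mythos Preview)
Your proposal is correct and follows exactly the route the paper indicates: the paper gives no detailed proof of the corollary, merely the remark that ``the RAGE Theorem leads to the following well-known corollary,'' and your argument is a careful expansion of precisely that standard passage from dynamical to spectral localization. Your observation that \eqref{eq:dyn-loc} alone yields only sub-exponential eigenfunction decay, with genuine exponential decay coming from the exponential Green-function bounds produced by the multi-scale analysis itself, is also accurate and consistent with how \cite{GeKl01} establishes exponential decay of eigenfunctions.
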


%
\section{Proof of the Wegner estimate}
%

The following proof of Theorem~\ref{Wegner II} is close to the one given in \cite{RandomBlockoperators}, the main difference being Lemma \ref{Lemma:....} below.

\begin{proof}[Proof of Theorem~\ref{Wegner II}]
In order to stress the dependence of the finite-volume operator on the families of random variables 
$V:=(V_n)_{n\in\Zd}$ and $B:=(B_n)_{n\in\Zd}$, we use the notation $\Hh_{L} \equiv \Hh_{L}(V,B)$ whenever appropriate. Since 
\begin{equation}
	\PP\left(\textnormal{all eigenvalues of $\Hh_L$ are non-degenerate} \right)=1,
\end{equation}
see e.g.\ \cite[Prop.\ II.1]{KuSo80}, we infer from analytic perturbation theory that for $\PP$-a.e.\
$(V,B)$ the distinct eigenvalues $E_{j} \equiv E_j(V,B)$, $j=1, \ldots, 2 |\Lambda_{L}|$, 
of $\Hh_{L}(V,B)$, which are ordered by magnitude, are all continuously differentiable (separately in each $V_{n}$ and each $B_{n}$ for $n=1,\ldots,|\Lambda_{L}|$) in the point $(V,B)$. 
For the time being we fix $E>0$ and $\epsilon>0$ with $3\epsilon<E$. Consider a switch function 
$\rho\in C^1(\mathbb R)$, i.e.\ $\rho$ is continuously differentiable, non-decreasing and obeys
$0\leq \rho\leq 1$, with $\rho(\eta)=1$ for $\eta>\epsilon$ and $\rho(\eta)=0$ for $\eta<-\epsilon$. Monotonicity gives the estimate
\begin{align}
	\label{start-int}
	\tr_{\mathcal H^2_L}\left[1_{[E-\epsilon,E+\epsilon[}\left(\Hh_{L}\right)\right]
	&\leq\sum\limits_{j=1}^{2|\Lambda_{L}|}\big[\rho\left(E_j-E+2\epsilon\right)-\rho\left(E_j-E-2\epsilon\right)\big]\nonumber\\
& = \int_{E-2\epsilon}^{E+2\epsilon}\!\d\eta\, 
\sum\limits_{j=1}^{2|\Lambda_L|}\rho'\left( E_j-\eta\right).
\end{align}
%
We infer from the chain rule that
\begin{multline}
	\label{chainrule}
	\sum_{n\in\Lambda_L} \left(\frac{\partial}{\partial V_n}+\frac{\partial}{\partial B_n}\right)
		\rho\big(E_j(V,B)-\eta\big) \\
		=  \rho'\big( E_j(V,B)-\eta\big) \sum_{n\in\Lambda_L} 
			\left(\frac{\partial}{\partial V_n}+\frac{\partial}{\partial B_n}\right)E_j(V,B)
\end{multline}
for all $j$, all $\eta$ and $\PP$-a.a.\ $(V,B)$. 
Unlike the standard Anderson model, the eigenvalues $E_j(V,B)$ are neither monotone in the $V_n$'s 
nor in the $B_n$'s, but the choice of $\epsilon$ ensures that only positive eigenvalues contribute to the $j$-sum in \eqref{start-int}. Therefore we apply Lemma \ref{Lemma:....} to \eqref{chainrule}, 
and estimate $\rho'$ in \eqref{start-int} according to
\begin{equation}
	\rho'\big( E_j(V,B)-\eta\big)\leq
	\sum_{n\in\Lambda_L} \left(\frac{\partial}{\partial V_n}+\frac{\partial}{\partial B_n}\right)
	\rho\big(E_j(V,B)-\eta\big).
\end{equation}
Taking the expectation of \eqref{start-int} and using its product structure, we obtain
\begin{align}
	\label{e-bound}
 	\mathbb E\Big\{\tr_{\mathcal H^2_L}& \big[1_{[E-\epsilon,E+\epsilon[} (\Hh_L)\big] \Big\} \\
	 \leq &\int_{E-2\epsilon}^{E+2\epsilon} \!\d\eta 
			\sum_{n\in\Lambda_L}
			\int_{\RR^{2|\Lambda_{L}|}} \! \bigg(\prod\limits_{k\in\Lambda_L}\d\mu_V(V_k) \,\d\mu_B(B_k)\bigg) 
 		\nonumber\\
	&\hspace*{0.5cm} \times \sum_{j=1}^{2|\Lambda_L|} \left(\frac{\partial}{\partial 
				V_n}+\frac{\partial}{\partial B_n}\right) \rho\big(E_j(V,B)-\eta\big).
\end{align}
Each term of the $n$-sum in the previous expression can be rewritten as 
\begin{align}
 \int_{\RR^{2|\Lambda_{L}| -2}} \! \bigg( &\prod\limits_{k\in\Lambda_L: \; k \neq n} \d\mu_V(V_k) \,\d\mu_B(B_k)\bigg) \nonumber\\
& \times \bigg[ \int_{\RR} \! \d\mu_{B}(B_{n}) \int_{\RR} \! \d\mu_{V}(V_{n}) \sum_{j=1}^{2|\Lambda_L|} 
	\frac{\partial}{\partial V_n} \rho\big(E_j(V,B)-\eta\big)  \nonumber\\
& \hspace*{.5cm} + \int_{\RR} \! \d\mu_{V}(V_{n}) \int_{\RR} \! \d\mu_{B}(B_{n}) \sum_{j=1}^{2|\Lambda_L|} 
	\frac{\partial}{\partial B_n} \rho\big(E_j(V,B)-\eta\big)\bigg].  
\end{align}
Functions like $X_{n} \mapsto F(X_{n}) := \sum_{j=1}^{2|\Lambda_L|} \rho
\big(E_j(V,B)-\eta\big)$, where $X$ stands for $V$ or $B$, are non-monotone in general. But analytic perturbation theory ensures that $F \in C^{1}(\RR)$. Moreover, $|F(x) - F(x')| \le 2$ for all $x,x'\in\RR$ by a rank-2-perturbation argument. Therefore, Lemma 5.4. in \cite{RandomBlockoperators} implies
\begin{equation}
	\label{BV}
	\int_{\RR}\! \d\mu_X(X_n) \sum\limits_{j=1}^{2|\Lambda_L|}\frac{\partial}{\partial X_n}
	\rho \big(E_j(V,B)-\eta\big)\leq 2\left\|\phi_X\right\|_{BV}
\end{equation}
for both $X=V$ and $X=B$. Thus, we conclude from \eqref{e-bound} -- \eqref{BV} that
\begin{equation}
	\label{wegner-loc}
	\mathbb E\Big\{\tr_{\mathcal H^2_L} \big[1_{[E-\epsilon,E+\epsilon[} (\Hh_L)\big] \Big\} 
	\leq 8\epsilon \left|\Lambda_L\right| 
	\big( \left\|\phi_V\right\|_{BV} + \left\|\phi_B\right\|_{BV} \big)
\end{equation}
for every $E>0$ and every $0< \epsilon < E/3$.
This bound and dominated convergence establish Lipschitz continuity of the integrated density of states
$\mathds N$ on $\mathbb R_{> 0}$ with Lipschitz constant $2 \left(\left\|\phi_V\right\|_{BV}+\left\|\phi_B\right\|_{BV}\right)$.  But due to the symmetry of the spectrum, see Remark~\ref{symm-spec}, 
this extends to $\RR \setminus \{0\}$. Furthermore, since $\mathds N$ is a continuous function on the whole real line $\RR$ -- which follows from standard arguments as in  \cite[Thm.\ 5.14]{Kirsch:Invitation} -- this yields Lipschitz continuity on $\mathbb R$ with the same constant. 
\end{proof}

\noindent
One of the main estimates in the previous proof is provided by the following deterministic result. 

\begin{Lemma}\label{Lemma:....}
	Assume that $H_L\geq0$, $B_L\geq 0$ and let $E(V,B) >0$ be a simple eigenvalue of 
	$\Hh_L(V,B)$. Then we have 
	\begin{equation}
		\sum_{n\in\Lambda_L}\left(\frac{\partial}{\partial V_n} + 
			\frac{\partial}{\partial B_n}\right)E(V,B)\geq 1.
	\end{equation}
\end{Lemma}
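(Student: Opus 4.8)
The plan is to use first-order perturbation theory (the Hellmann–Feynman formula) to express the left-hand side as an expectation value of a diagonal perturbation operator in the normalized eigenstate, and then to exploit the block structure together with the positivity hypotheses $H_L \ge 0$, $B_L \ge 0$ and $E(V,B) > 0$ to bound this expectation from below by $1$.

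Concretely, write $\Psi = \binom{\varphi}{\psi} \in \mathcal H^2_L$ for the normalized eigenvector of $\Hh_L(V,B)$ with eigenvalue $E := E(V,B)$, so $\|\varphi\|^2 + \|\psi\|^2 = 1$. By the Hellmann–Feynman theorem, for each $n$ we have $\partial E/\partial V_n = \llangle \Psi, (\partial \Hh_L/\partial V_n)\Psi\rrangle$ and similarly for $B_n$. Since $\partial \Hh_L/\partial V_n = \big(\begin{smallmatrix} P_n & 0 \\ 0 & -P_n \end{smallmatrix}\big)$ and $\partial \Hh_L/\partial B_n = \big(\begin{smallmatrix} 0 & P_n \\ P_n & 0 \end{smallmatrix}\big)$, where $P_n$ is the rank-one projection onto $\delta_n$, summing over $n \in \Lambda_L$ telescopes the $P_n$'s into the identity and yields
\begin{equation}
	\sum_{n\in\Lambda_L}\left(\frac{\partial}{\partial V_n} + \frac{\partial}{\partial B_n}\right)E
	= \langle\varphi,\varphi\rangle - \langle\psi,\psi\rangle + 2\,\mathrm{Re}\,\langle\varphi,\psi\rangle
	= \|\varphi\|^2 - \|\psi\|^2 + 2\,\mathrm{Re}\,\langle\varphi,\psi\rangle.
\end{equation}
So the lemma reduces to the purely linear-algebraic claim that $\|\varphi\|^2 - \|\psi\|^2 + 2\,\mathrm{Re}\,\langle\varphi,\psi\rangle \ge 1$ for any eigenvector of $\big(\begin{smallmatrix} H_L & B_L \\ B_L & -H_L\end{smallmatrix}\big)$ with positive eigenvalue, given $H_L, B_L \ge 0$.

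To prove that claim, I would use the eigenvalue equations $H_L\varphi + B_L\psi = E\varphi$ and $B_L\varphi - H_L\psi = E\psi$. From these, $\langle\varphi, H_L\varphi\rangle + \langle\varphi,B_L\psi\rangle = E\|\varphi\|^2$ and $\langle\psi,B_L\varphi\rangle - \langle\psi,H_L\psi\rangle = E\|\psi\|^2$; subtracting and using self-adjointness and reality (so $\langle\varphi,B_L\psi\rangle = \langle\psi,B_L\varphi\rangle$ up to complex conjugation) gives
\begin{equation}
	E\big(\|\varphi\|^2 - \|\psi\|^2\big) = \langle\varphi,H_L\varphi\rangle + \langle\psi,H_L\psi\rangle \ge 0,
\end{equation}
hence $\|\varphi\|^2 \ge \|\psi\|^2$ since $E > 0$. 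For the cross term, take the inner product of the first equation with $\psi$ and of the second with $\varphi$: $\langle\psi,H_L\varphi\rangle + \langle\psi,B_L\psi\rangle = E\langle\psi,\varphi\rangle$ and $\langle\varphi,B_L\varphi\rangle - \langle\varphi,H_L\psi\rangle = E\langle\varphi,\psi\rangle$; adding and taking real parts, the $H_L$-terms cancel (again by self-adjointness/reality) and one obtains $2E\,\mathrm{Re}\langle\varphi,\psi\rangle = \langle\varphi,B_L\varphi\rangle + \langle\psi,B_L\psi\rangle \ge 0$, so $\mathrm{Re}\langle\varphi,\psi\rangle \ge 0$. Combining the two facts with the normalization $\|\varphi\|^2 + \|\psi\|^2 = 1$ gives $\|\varphi\|^2 - \|\psi\|^2 + 2\,\mathrm{Re}\langle\varphi,\psi\rangle \ge \|\varphi\|^2 - \|\psi\|^2 = 2\|\varphi\|^2 - 1 \ge 0$; to squeeze out the full $1$ I would instead combine $\|\varphi\|^2-\|\psi\|^2 \ge 0$ with a sharper use of $2\,\mathrm{Re}\langle\varphi,\psi\rangle$, or note $\|\varphi\|^2 - \|\psi\|^2 + 2\mathrm{Re}\langle\varphi,\psi\rangle \ge \|\varphi\|^2+\|\psi\|^2 - (\|\psi\|^2 - \|\varphi\|^2 + \text{something})$; the cleanest route is to observe that $\|\varphi+\psi\|^2 = 1 + 2\mathrm{Re}\langle\varphi,\psi\rangle \ge 1$ already, and separately $\|\varphi\|^2 - \|\psi\|^2 \ge 0$, and then check directly that the sum $\|\varphi\|^2 - \|\psi\|^2 + 2\mathrm{Re}\langle\varphi,\psi\rangle = \|\varphi+\psi\|^2 - 2\|\psi\|^2$ together with the bound $\|\psi\|^2 \le \|\varphi\|^2$, i.e.\ $2\|\psi\|^2 \le \|\varphi\|^2+\|\psi\|^2 = 1$, does \emph{not} immediately finish; rather, one feeds $E(\|\varphi\|^2-\|\psi\|^2) = \langle\Psi, \mathrm{diag}(H_L,H_L)\Psi\rangle$ and $2E\,\mathrm{Re}\langle\varphi,\psi\rangle = \langle\Psi,\mathrm{diag}(B_L,B_L)\Psi\rangle$ back in to get $E\cdot(\text{target}) = \langle\Psi,\mathrm{diag}(H_L+B_L,H_L+B_L)\Psi\rangle$, and then uses a lower bound $H_L + B_L \ge$ (something comparable to $E$) — or more robustly, notes that the target equals $\langle\Psi, M\Psi\rangle/E$ with $M = \mathrm{diag}(H_L+B_L, H_L+B_L) = (\Hh_L + \Hh_L')/\!\cdots$; I will use the identity $\Hh_L^2 = \mathrm{diag}(H_L^2+B_L^2, H_L^2+B_L^2) + \text{off-diag commutator terms}$ acting on $\Psi$ to relate $\langle\Psi, \mathrm{diag}(H_L+B_L,\cdot)\Psi\rangle$ to $E$.

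The main obstacle is this last sharpening: getting exactly the constant $1$ (not merely a positive quantity) out of the quadratic form. I expect the right mechanism is the operator inequality $H_L + B_L \ge $ the "deficiency" built into the eigenvalue relation — i.e.\ one shows $\langle\Psi,\mathrm{diag}(H_L+B_L,H_L+B_L)\Psi\rangle \ge E$ by exploiting that $\Hh_L\Psi = E\Psi$ forces $\|H_L\varphi + B_L\psi\| = E\|\varphi\|$ etc.\ and then an application of Cauchy–Schwarz or the spectral theorem on the positive operators $H_L, B_L$. I would organize the proof to isolate this as a short self-contained positivity argument after reducing, via Hellmann–Feynman, to the finite-dimensional block identity; the perturbation-theoretic and telescoping steps are routine, and the only real content is the matrix inequality.
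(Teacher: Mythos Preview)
Your setup is correct and in fact identical to the paper's: after Hellmann--Feynman and telescoping you arrive at
\[
	E \cdot \sum_{n\in\Lambda_L}\Big(\frac{\partial}{\partial V_n}+\frac{\partial}{\partial B_n}\Big)E
	= \langle\varphi,H_L\varphi\rangle + \langle\psi,H_L\psi\rangle
	+ \langle\varphi,B_L\varphi\rangle + \langle\psi,B_L\psi\rangle,
\]
which is exactly the identity the paper derives. But your proposal then stalls: you correctly flag ``getting exactly the constant $1$'' as the main obstacle, and the approaches you float (an operator inequality $H_L+B_L \ge cE$, Cauchy--Schwarz, or computing $\Hh_L^2$) are not the right mechanism. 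There is no reason for $H_L+B_L \ge E$ to hold as an operator inequality on $\ell^2(\Lambda_L)$, so that route is a dead end.

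The missing step is to feed the eigenvalue equations back in a \emph{second} time. Since $B_L\ge 0$, the quadratic form $\langle(\varphi-\psi),B_L(\varphi-\psi)\rangle\ge 0$ gives
\[
	\langle\varphi,B_L\varphi\rangle + \langle\psi,B_L\psi\rangle
	\ge \langle\varphi,B_L\psi\rangle + \langle\psi,B_L\varphi\rangle,
\]
and since $H_L\ge 0$ one has $\langle\psi,H_L\psi\rangle \ge -\langle\psi,H_L\psi\rangle$. Applying both to the right-hand side above yields
\[
	\ge \langle\varphi,H_L\varphi + B_L\psi\rangle + \langle\psi,B_L\varphi - H_L\psi\rangle
	= E\|\varphi\|^2 + E\|\psi\|^2 = E,
\]
using $H_L\varphi+B_L\psi=E\varphi$ and $B_L\varphi-H_L\psi=E\psi$. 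Dividing by $E>0$ finishes the proof. So you were one line away: the trick is not a spectral bound on $H_L+B_L$, but rather to convert the ``diagonal'' $B_L$-terms into ``off-diagonal'' ones (and flip the sign of one $H_L$-term) so that the eigenvalue equations can be applied once more.
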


\begin{proof}
Let $\Psi = (\psi_1,\psi_2)$ be a normalized eigenvector corresponding to the eigenvalue 
$E \equiv E(V,B)$ of the operator $\Hh_{L} \equiv \Hh_L(V,B)$, i.e.\ 
$\langle \psi_1,\psi_1\rangle+\langle \psi_2,\psi_2\rangle=1$ and
\begin{equation}
 	\label{ev-eq}
	\begin{split}
		H_L\psi_1+B\psi_2&=E\psi_1,\\
		B\psi_1-H_L\psi_2&=E\psi_2.
	\end{split}
\end{equation}
The Feynman-Hellmann formula for a non-degenerate eigenvalue and \eqref{ev-eq} imply
\begin{align}
 	E\sum\limits_{j\in\Lambda_L} &\left(\frac{\partial}{\partial V_j} + 
 	\frac{\partial}{\partial B_j}\right) E \nonumber\\
	&=E \,\big( \langle \psi_1, \psi_1\rangle - \langle \psi_2, \psi_2\rangle 
		+ \langle \psi_1, \psi_2\rangle + \langle \psi_2,\psi_1\rangle \big)\nonumber\\
	&= \left\langle \psi_1, H_L\psi_1 +B\psi_2\right\rangle - 
			\left\langle B\psi_1-H_L\psi_2,\psi_2\right\rangle
 		  + \left\langle \psi_1,B\psi_1-H_L\psi_2\right\rangle \nonumber\\ 
 	& \phantom{==} + \left\langle \psi_2,H_L\psi_1+B\psi_2\right\rangle \nonumber\\
&=\left\langle \psi_1,H_L\psi_1\right\rangle+\left\langle \psi_2,H_L\psi_2\right\rangle+\left\langle 
\psi_1,B\psi_1\right\rangle+\left\langle \psi_2,B\psi_2\right\rangle\label{FeHe}.
\end{align}
In the last step we used that the operator $\Hh_{L}$ is a real symmetric matrix and, therefore, the eigenvector $\Psi$ can be chosen to be real.
Since $B\geq 0$, we have
\begin{equation}
	\left\langle \psi_1,B\psi_1\right\rangle + \left\langle \psi_2,B\psi_2\right\rangle 
	\ge 	\left\langle \psi_1,B\psi_2\right\rangle + \left\langle \psi_2,B\psi_1\right\rangle.
\end{equation}
This and $H_{L} \ge 0$ yield the lower bound
\begin{equation}
	\langle \psi_1, H_L\psi_1+B\psi_2\rangle + \langle \psi_2,B\psi_1-H_L\psi_2\rangle
	=E \,\big( \langle \psi_1,\psi_1\rangle + \langle \psi_2,\psi_2\rangle\big) =E
\end{equation}
for the r.h.s.\ of \eqref{FeHe}.
\end{proof}

%
\section{Proof of Lifschitz tails}
\label{sec:lif}
%

In this section we prove Theorems~\ref{LifschitzII} and~\ref{Lifschitz-lower}.

\noindent
The idea behind the proof of Theorem~\ref{LifschitzII} is to estimate the integrated density of states of $\Hh$ in terms of 
the integrated density of states of the operator
\begin{equation}
\Hh(\beta):=\begin{pmatrix} H&\beta\one\\ \beta\one&-H\end{pmatrix},
\end{equation}
on $\mathcal{H}^{2}$, where $\beta$ is as in Theorem~\ref{LifschitzII} and $\one$ denotes the unit 
operator on $\ell^{2}(\Zd)$.
This is useful because  we explicitly know the relation between the spectra of $\Hh(\beta)$ and $H$, and because the discrete Schr\"odinger operator $H$ of the Anderson model 
exhibits Lifschitz tails at the edges of its spectrum. For the lower spectral edge of $H$ the upper Lifschitz-tail estimate is summarized in the next lemma, for a proof see e.g. \cite{Carmona, PasturFigotin, springerlink:10.1007/3-540-51783-9_23}.

\begin{Lemma}[Upper Lifschitz-tail estimate for $\boldsymbol H$]
	\label{StandLif}
	Let $H$ be the discrete random Schr\"o\-dinger operator of the Anderson model as in 
	Definition~\ref{Def:Hh}. Assume in addition that the single-site probability measure $\mu_{V}$ 
	is not concentrated in a single point.
	 Then, the integrated density of states $N_H$ of the operator $H$ obeys
	\begin{equation}
		\limsup_{\epsilon\searrow 0} \frac{\ln\left|\ln\left[N_H(\lambda + \epsilon)\right]
				\right|}{\ln\epsilon}\leq-\frac{d}{2},
	\end{equation}
	where $\lambda :=  \inf\supp \mu_{V}= \inf \sigma(H)$ is the infimum of the almost-sure spectrum of $H$.
\end{Lemma}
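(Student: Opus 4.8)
The plan is to run the classical Lifschitz-tail argument for a fluctuation edge: reduce the infinite volume to a finite cube by Neumann bracketing, prove a deterministic lower bound on the ground-state energy of the cube operator which is effective unless the random potential is atypically small on the cube, and control the probability of this atypical event by a large-deviation estimate, all with the cube size tuned to $\ell\asymp\epsilon^{-1/2}$. After shifting the potential we may assume $\lambda=0$, so that $H=H_0+V\ge0$, $V\ge0$ and $\inf\supp\mu_V=0$; since $\mu_V$ is not a single point mass there are $v_*>0$ and $q_*:=\mu_V\big([v_*,\infty)\big)>0$. Partitioning $\Zd$ into translates of the discrete cube $\Lambda_\ell$ and using Neumann bracketing together with the fact that $N_H$ is the self-averaging macroscopic limit of finite-volume eigenvalue counting functions, one obtains
\[
 N_H(\epsilon)\ \le\ \EE\Big[\tfrac1{|\Lambda_\ell|}\tr_{\ell^2(\Lambda_\ell)}1_{]-\infty,\epsilon]}\big(H_{0,\Lambda_\ell}^N+V\big)\Big]\ \le\ \PP\big(E_0\le\epsilon\big),
\]
where $H_{0,\Lambda_\ell}^N$ is the Neumann Laplacian on $\Lambda_\ell$ and $E_0:=\inf\sigma\big(H_{0,\Lambda_\ell}^N+V\big)$; the last step simply bounds the trace by $\dim\ell^2(\Lambda_\ell)=|\Lambda_\ell|$ on the event that there is an eigenvalue below $\epsilon$, and by $0$ otherwise.

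Next I would establish the deterministic heart of the matter: there are constants $c>0$, $\theta\in(0,q_*)$ and $\ell_0$ such that for every $\ell\ge\ell_0$ and every potential configuration with $\#\{n\in\Lambda_\ell:V_n\ge v_*\}\ge\theta|\Lambda_\ell|$ one has $E_0\ge c\,\ell^{-2}$. Indeed, $H_{0,\Lambda_\ell}^N$ has lowest eigenvalue $0$ (the constant functions) with a spectral gap $\ge c_1\ell^{-2}$ above it; hence if $\psi$ is a normalized ground state of $H_{0,\Lambda_\ell}^N+V$ with $E_0$ much smaller than $\ell^{-2}$, then $\langle\psi,H_{0,\Lambda_\ell}^N\psi\rangle\le E_0$ and the discrete Poincaré inequality force $\psi$ to be $\ell^2$-close to a constant function; by the density hypothesis this makes $\sum_{n:\,V_n\ge v_*}|\psi(n)|^2$ bounded below by a positive constant, so that $\langle\psi,V\psi\rangle\ge v_*\sum_{n:\,V_n\ge v_*}|\psi(n)|^2$ is bounded below by a positive constant, contradicting $E_0\ll\ell^{-2}$ for $\ell$ large. (One may equivalently run this with the empirical mean $|\Lambda_\ell|^{-1}\sum_{n\in\Lambda_\ell}V_n$ in place of the site count.) I expect this quantitative ``small Dirichlet form $\Rightarrow$ near-constancy, which is incompatible with a positive density of large potential values'' estimate to be the only genuinely technical point; it is also the sole place where the geometry of the cube, through the $\ell^{-2}$ scaling of the Neumann gap, enters.

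To finish, I would pick $\ell=\ell(\epsilon):=\lfloor(c/\epsilon)^{1/2}\rfloor$, so that $c\,\ell^{-2}\ge\epsilon$ and $\ell\ge\tfrac12(c/\epsilon)^{1/2}$ for small $\epsilon$. By the deterministic bound, $\{E_0\le\epsilon\}\subseteq\{\#\{n\in\Lambda_\ell:V_n\ge v_*\}<\theta|\Lambda_\ell|\}$, and since the random variables $1_{\{V_n\ge v_*\}}$, $n\in\Lambda_\ell$, are i.i.d.\ Bernoulli with mean $q_*>\theta$, Cramér's theorem (or Hoeffding's inequality) yields $\PP\big(\#\{n\in\Lambda_\ell:V_n\ge v_*\}<\theta|\Lambda_\ell|\big)\le\e^{-c_2|\Lambda_\ell|}$ with $c_2>0$. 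Combining this with the previous display gives $N_H(\epsilon)\le\e^{-c_2\ell^d}\le\e^{-c_3\epsilon^{-d/2}}$ for small $\epsilon$, with $c_3>0$; taking $\ln\big|\ln(\cdot)\big|$, dividing by $\ln\epsilon<0$ (which reverses the inequality), and letting $\epsilon\searrow0$ gives $\limsup_{\epsilon\searrow0}\ln\big|\ln N_H(\epsilon)\big|/\ln\epsilon\le-d/2$, as asserted. The remaining ingredients --- Neumann bracketing, the crude trace bound, the large-deviation estimate, and the concluding arithmetic --- are all routine.
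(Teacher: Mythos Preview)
The paper does not give its own proof of this lemma; it states the result and defers to the standard references \cite{Carmona, PasturFigotin, springerlink:10.1007/3-540-51783-9_23}. Your sketch reproduces exactly the classical argument found there (Neumann bracketing, the ``near-constancy versus positive density of large potential values'' lower bound on the finite-volume ground state, a large-deviation bound for i.i.d.\ Bernoulli, and the choice $\ell\asymp\epsilon^{-1/2}$), and it is correct in all essentials.
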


\noindent
The remaining arguments needed for the proof of Theorem~\ref{LifschitzII} are all deterministic.
The next lemma, which is a particular case of \cite[Thm.\ 1.9.1]{Tretter:Spectraltheory}, 
provides a variational principle for the positive spectrum 
of the finite-volume block operator $\Hh_{L}$. 

\begin{Lemma}[Min-max-max principle]
	\label{min-max-max}
        Given $A,B$ and $D$ self adjoint operators on $\mathcal H=l^2(\Lambda_L)$ with $A>-D$, define the block operator $\mathds A:=\begin{pmatrix}
                                      A & B\\
                                      B & -D
                                     \end{pmatrix}$
        on $\mathcal H^2$.
 	Then
	\begin{nummer}
         \item  there are precisely $|\Lambda_L|$ eigenvalues of $\mathds A$, $\lambda_1,...,\lambda_{|\Lambda_L|}$, with $\lambda_j>\sup\sigma(-D)$ and
	 \item  the eigenvalues $\lambda_j>\sup\sigma(-D)$, $j=1, \ldots, |\Lambda_{L}|$, ordered by magnitude and repeated according to their multiplicity, are given by
                 \begin{multline}
		        \lambda_{j} = \min_{\substack{\mathcal V\subset \ell^{2}(\Lambda_{L}): \\ \dim \mathcal V=j}}\ 
		        \max_{\substack{f\in\mathcal V:\\ \left\|f\right\|=1}}\
		        \max_{\substack{g\in \ell^{2}(\Lambda_{L}): \\ \left\|g\right\|=1}}
		        \Bigg\{ \frac{\langle f,Af \rangle - \langle g,Dg\rangle}{2} \\
		      + \sqrt{ \left( \frac{\langle f,Af \rangle + \langle g,Dg \rangle}{2}\right)^2 
	              + \left|\left\langle f,Bg\right\rangle \right|^2} \; \Bigg\}.
	         \end{multline}
	%
        \end{nummer}
\end{Lemma}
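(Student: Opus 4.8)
\noindent
Since the statement is quoted as the special case of \cite[Thm.~1.9.1]{Tretter:Spectraltheory} in which the two component spaces coincide with the finite-dimensional space $\ell^{2}(\Lambda_{L})$ and $A,B,D$ are bounded self-adjoint, one legitimate proof is to check that $\mathds{A}$ satisfies the hypotheses of that theorem — here they reduce to the stated assumption that $A$ lies spectrally above $-D$ — and then read off the two assertions. I will instead sketch a self-contained argument via the first Schur complement, which makes both the variational formula and the counting transparent.

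\emph{Schur complement and the count~(i).} Put $\nu_{0}:=\sup\sigma(-D)$. For $\lambda>\nu_{0}$ the operator $\lambda+D$ is positive, hence invertible, and the Frobenius factorization of $\mathds{A}-\lambda$ shows that $\lambda\in\sigma(\mathds{A})$ if and only if $0\in\sigma\big(S(\lambda)\big)$, with equal multiplicities, where
\[
 S(\lambda):=(A-\lambda)+B(\lambda+D)^{-1}B
\]
acts on $\ell^{2}(\Lambda_{L})$. From $S'(\lambda)=-I-B(\lambda+D)^{-2}B\le -I$ one sees that $\lambda\mapsto S(\lambda)$ is strictly operator-decreasing on $(\nu_{0},\infty)$, so each of its $|\Lambda_{L}|$ eigenvalues, listed in decreasing order, is a continuous and strictly decreasing function of $\lambda$ there. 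The separation of $A$ from $-D$ gives $\inf\sigma(A)>\nu_{0}$, hence $S(\lambda)>0$ on the non-empty interval $(\nu_{0},\inf\sigma(A))$ (there $A-\lambda>0$ and $B(\lambda+D)^{-1}B\ge0$), whereas $S(\lambda)\to-\infty$ as $\lambda\to\infty$. Therefore each eigenvalue of $S(\cdot)$ has exactly one zero in $(\nu_{0},\infty)$, and by the displayed equivalence these zeros, counted with multiplicity, are precisely the eigenvalues of $\mathds{A}$ in $(\nu_{0},\infty)$. This is~(i), and it identifies $\lambda_{1}\le\dots\le\lambda_{|\Lambda_{L}|}$ with these zeros in increasing order.

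\emph{The variational formula~(ii).} For unit vectors $f,g\in\ell^{2}(\Lambda_{L})$ set $\mathcal{M}(f,g):=\bigl(\begin{smallmatrix}\langle f,Af\rangle & \langle f,Bg\rangle\\\langle g,Bf\rangle & -\langle g,Dg\rangle\end{smallmatrix}\bigr)$, whose larger eigenvalue $\mu_{+}(f,g)$ is precisely the quantity maximized in the statement. For $c>\nu_{0}$, $\mu_{+}(f,g)\le c$ holds if and only if $cI_{2}-\mathcal{M}(f,g)\ge0$; since the $(2,2)$-entry $c+\langle g,Dg\rangle$ of this matrix is positive, that is the determinant inequality $\bigl(c-\langle f,Af\rangle\bigr)\bigl(c+\langle g,Dg\rangle\bigr)\ge|\langle f,Bg\rangle|^{2}$. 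Requiring it for all $g$ and using $\sup_{g\neq0}|\langle Bf,g\rangle|^{2}/\langle g,(c+D)g\rangle=\langle Bf,(c+D)^{-1}Bf\rangle$ yields, for each unit $f$,
\[
 \max_{\|g\|=1}\mu_{+}(f,g)\le c \iff \langle f,S(c)f\rangle\le 0 .
\]
Therefore $\max_{f\in\mathcal{V},\,\|f\|=1}\max_{\|g\|=1}\mu_{+}(f,g)=\min\{c>\nu_{0}:S(c)|_{\mathcal{V}}\le0\}$ for every subspace $\mathcal{V}$, and minimizing over $j$-dimensional $\mathcal{V}$ gives $\min\{c>\nu_{0}:S(c)\text{ has at least }j\text{ non-positive eigenvalues}\}$, which by the monotonicity of $S(\cdot)$ is exactly the $j$-th zero $\lambda_{j}$ from the previous paragraph. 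This proves~(ii).

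\emph{The point requiring care.} The one substantive step is the positivity of $S(\lambda)$ near the threshold $\nu_{0}$: it must be made precise how the hypothesis $A>-D$ forces $\inf\sigma(A)>\nu_{0}$, so that no eigenvalue of $\mathds{A}$ accumulates at $\nu_{0}$, the count above $\nu_{0}$ is exactly $|\Lambda_{L}|$, and the increasing enumeration is well anchored. In the application $A=D=H_{L}>0$, so that $\nu_{0}=-\inf\sigma(H_{L})<0<\inf\sigma(H_{L})$ and this is immediate; note that $B(\lambda+D)^{-1}B$ itself may diverge as $\lambda\searrow\nu_{0}$ when $B$ does not annihilate the ground eigenspace of $D$, but only the sign of $S(\lambda)$ is needed. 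Everything else — the Frobenius factorization and its multiplicity bookkeeping, the $2\times2$ determinant computation, and the monotonicity argument — is routine, and the whole statement can alternatively be obtained by specializing \cite[Thm.~1.9.1]{Tretter:Spectraltheory}.
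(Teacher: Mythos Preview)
The paper does not give its own proof of this lemma; it simply records it as the finite-dimensional special case of \cite[Thm.~1.9.1]{Tretter:Spectraltheory} and moves on. Your self-contained Schur-complement sketch is correct and is, in outline, precisely how such variational principles are established in Tretter's framework, so there is no genuinely different strategy to compare --- you have supplied the argument the paper chose to outsource.

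The point you single out for care is real and deserves a sharper formulation. The hypothesis ``$A>-D$'' cannot be read merely as positive definiteness of $A+D$: take $B=0$, $A=\mathrm{diag}(3,1)$ and $-D=\mathrm{diag}(2,0)$; then $A+D=I>0$, yet $\mathds{A}$ has eigenvalues $\{3,2,1,0\}$ and only one of them exceeds $\sup\sigma(-D)=2$, so assertion~(i) already fails. The condition that makes both the lemma and your argument work is the spectral separation $\inf\sigma(A)>\sup\sigma(-D)$, which is what Tretter's hypothesis amounts to here. With that reading your sentence ``the separation of $A$ from $-D$ gives $\inf\sigma(A)>\nu_{0}$'' is the hypothesis itself rather than something to be derived; the interval $(\nu_{0},\inf\sigma(A))$ is then non-empty, $S(\lambda)>0$ there, and your counting and variational steps go through without change. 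In the paper's two applications (proofs of Theorems~\ref{LifschitzII} and~\ref{Lifschitz-lower}) one has $A=H_L$, $D=H_L$, respectively $A=H_L^{D}$, $D=H_L^{N}$, and the paper explicitly notes $H_L>0$ (likewise $H_L^{D}>0$, $H_L^{N}\ge 0$), so $\inf\sigma(A)>0\ge\sup\sigma(-D)$ and the separation is immediate --- exactly as you observe.
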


\noindent
This variational characterization will serve to relate the positive spectrum of $\Hh_{L}$ 
to that of $\Hh_L(\beta)$, which is the restriction of $\Hh(\beta)$ to $\mathcal{H}^{2}_{L}$ 
in analogy with Definition~\ref{Def:HhL}.
Finally, we relate the spectrum of $\Hh_{L}(\beta)$ to that 
of its diagonal block $H_{L}$.

\begin{Lemma}[\protect{\cite[Prop.\ 3.1]{RandomBlockoperators}}]
	\label{Lemma:H beta}
	The spectrum of $\Hh_{L}(\beta)$ is given by 
	\begin{equation}
		\sigma\big(\Hh_{L}(\beta)\big) = \big\{\pm\sqrt{E^2+\beta^2}:\ E\in\sigma(H_{L})\big\}.
	\end{equation}
\end{Lemma}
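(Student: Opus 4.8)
The statement to prove is Lemma~\ref{Lemma:H beta}: the spectrum of $\Hh_{L}(\beta)$ equals $\{\pm\sqrt{E^{2}+\beta^{2}}: E\in\sigma(H_{L})\}$.

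The plan is to diagonalize the block structure by an explicit unitary transformation that "rotates" each pair of coordinates. Since $\Hh_{L}(\beta) = \big(\begin{smallmatrix} H_{L} & \beta\one \\ \beta\one & -H_{L}\end{smallmatrix}\big)$ has both diagonal blocks equal to $\pm H_{L}$ and both off-diagonal blocks equal to the scalar multiple $\beta\one$, the operators $H_{L}$ and $\beta\one$ commute. So I would first note that on the joint invariant subspaces given by the spectral decomposition of $H_{L}$ — or simply by passing to an eigenbasis of the self-adjoint operator $H_{L}$ on the finite-dimensional space $\ell^{2}(\Lambda_{L})$ — the operator $\Hh_{L}(\beta)$ decomposes into a direct sum over the eigenvalues $E\in\sigma(H_{L})$ of $2\times2$ blocks $\big(\begin{smallmatrix} E & \beta \\ \beta & -E\end{smallmatrix}\big)$ (each repeated according to the multiplicity of $E$).

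The second step is the elementary linear algebra computation: the $2\times2$ real symmetric matrix $\big(\begin{smallmatrix} E & \beta \\ \beta & -E\end{smallmatrix}\big)$ has trace $0$ and determinant $-(E^{2}+\beta^{2})$, hence eigenvalues $\pm\sqrt{E^{2}+\beta^{2}}$. Taking the union over all $E\in\sigma(H_{L})$ then yields $\sigma\big(\Hh_{L}(\beta)\big) = \{\pm\sqrt{E^{2}+\beta^{2}}: E\in\sigma(H_{L})\}$, which is the claim. Concretely, if $(\varphi_{k})$ is an orthonormal basis of eigenvectors of $H_{L}$ with $H_{L}\varphi_{k} = E_{k}\varphi_{k}$, then the vectors $\big(\begin{smallmatrix}\cos\theta_{k}\,\varphi_{k} \\ \sin\theta_{k}\,\varphi_{k}\end{smallmatrix}\big)$ and $\big(\begin{smallmatrix}-\sin\theta_{k}\,\varphi_{k} \\ \cos\theta_{k}\,\varphi_{k}\end{smallmatrix}\big)$, with $\theta_{k}$ chosen to diagonalize the corresponding $2\times2$ block, form an orthonormal eigenbasis of $\Hh_{L}(\beta)$ in $\mathcal{H}^{2}_{L}$ with eigenvalues $\pm\sqrt{E_{k}^{2}+\beta^{2}}$.

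There is essentially no obstacle here, since everything takes place in finite dimensions and the two blocks commute; the only mild point of care is bookkeeping the multiplicities and making sure the $2\times2$ diagonalization is carried out consistently across the eigenspaces of $H_{L}$, but as we are only asserting an equality of spectra (as sets) even the multiplicity bookkeeping is inessential. One could equally invoke the general fact that for commuting self-adjoint operators the block operator is unitarily equivalent, via a fibrewise rotation, to $\big(\begin{smallmatrix}\sqrt{H_{L}^{2}+\beta^{2}\one} & 0 \\ 0 & -\sqrt{H_{L}^{2}+\beta^{2}\one}\end{smallmatrix}\big)$, from which the spectral identity is immediate by the spectral mapping theorem applied to $E\mapsto\sqrt{E^{2}+\beta^{2}}$ on $\sigma(H_{L})$. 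Since this is quoted from \cite[Prop.\ 3.1]{RandomBlockoperators}, I would simply present the short rotation argument and refer there for details.
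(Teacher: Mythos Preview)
Your argument is correct: in finite dimensions, the commutativity of $H_{L}$ and $\beta\one$ lets you pass to an eigenbasis of $H_{L}$, where $\Hh_{L}(\beta)$ splits into $2\times2$ blocks $\big(\begin{smallmatrix} E & \beta \\ \beta & -E\end{smallmatrix}\big)$ with eigenvalues $\pm\sqrt{E^{2}+\beta^{2}}$. Note, however, that the paper does not supply its own proof of this lemma; it simply quotes it as \cite[Prop.\ 3.1]{RandomBlockoperators}, so there is nothing to compare against here beyond observing that your short rotation/diagonalization argument is exactly the standard one underlying that cited proposition.
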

 
\medskip
\noindent
Now we are prepared for the 

\begin{proof}[Proof of Theorem~\ref{LifschitzII}]

	Since $H \geq 0$ we have $H_L>0$ and can apply Lemma~\ref{min-max-max}. Setting $f=g$ there and noting 
	that $\PP$-a.s.\ $\beta = \inf \sigma(|B|)$, 	we infer
	\begin{align}
		\lambda_j & \ge 	\min_{\substack{\mathcal V\subset \ell^{2}(\Lambda_{L}):\\ \dim \mathcal V=j}}\ 
			\max_{\substack{f\in\mathcal V:\\ \left\|f\right\|=1}}
			\sqrt{ \langle f, H_Lf \rangle^2 + \langle f,Bf\rangle ^2} \nonumber \\
		& \ge 	\min_{\substack{\mathcal V\subset \ell^{2}(\Lambda_{L}):\\ \dim \mathcal V=j}}\ 
			\max_{\substack{f\in\mathcal V:\\ \left\|f\right\|=1}}
			\sqrt{ \langle f, H_Lf \rangle^2 + \beta^{2}} \nonumber \\
		& = \left[ \Bigg( \min_{\substack{\mathcal V\subset\ell^{2}(\Lambda_{L}): \\ \dim \mathcal V=j}}\ 
			\max_{\substack{f\in\mathcal V:\\ \left\|f\right\|=1}} \langle f, H_L f \rangle \Bigg)^2+\beta^2 
			 \right]^{1/2} 
	\end{align}
	for every $j=1, \ldots, |\Lambda_{L}|$. We denote the positive eigenvalues of $\Hh_{L}(\beta)$
	by $0 < \mu_{1} \le \ldots \le \mu_{|\Lambda_{L}|}$.	The min-max principle for $H_{L}$ and 
	Lemma~\ref{Lemma:H beta} then imply
	\begin{equation}
		\label{Lifschitz:Label}
 		\lambda_j \ge \mu_{j}
	\end{equation}
	for every $j=1, \ldots, |\Lambda_{L}|$.
        \noindent
	Symmetry of the spectra of $\Hh_L$ and $\Hh_{L}(\beta)$, see Remark~\ref{symm-spec}, 
	the strict positivity $H_{L} > \inf \sigma(H) = \lambda \ge 0$ and Lemma~\ref{doublegap} imply
	\begin{equation}
		\label{half-half}
		\mathds N_{\Hh_L} \big(\sqrt{\lambda^2+\beta^2}\big) 
		= 
		\mathds N_{\Hh_{L}(\beta)} \big(\sqrt{\lambda^2+\beta^2}\big) 
		=\frac{1}{2}.
	\end{equation} 
	Setting $E:=\sqrt{\lambda^2+\beta^2}+\epsilon$ for $\epsilon >0$, Eqs.\ \eqref{half-half} and 
	\eqref{Lifschitz:Label} give the estimate
	\begin{align}
		\mathds N_{\Hh_{L}}(E) - \mathds N_{\Hh_{L}} &\big(\sqrt{\lambda^2 +\beta^2}\big) \nonumber\\
	  &\leq \mathds N_{\Hh_{L}(\beta)}(E) - \mathds N_{\Hh_{L}(\beta)} 
			\big(\sqrt{\lambda^2+\beta^2}\big) \nonumber\\
		& = \frac{1}{2|\Lambda_{L}|}\, \big| \big\{ \,\mu\in\sigma\big(\Hh_{L}(\beta)\big):\ 
			\mu\in \big[ \sqrt{\lambda^2+\beta^2}, E \big[ \;\big\}\big|\nonumber\\
		& = \frac{1}{2|\Lambda_{L}|}\, \big| \big\{ \widetilde\mu\in\sigma(H_{L}):\ \widetilde{\mu}\in \big[\lambda,
			\sqrt{E^2-\beta^2}\big[ \;\big\}\big| \nonumber\\ 
		&= \frac{1}{2}\, N_{H_{L}} \big(\sqrt{E^2-\beta^2}\big),
	\end{align}
	where we have used Lemma~\ref{Lemma:H beta} for the second equality.
	Therefore we get in the limit $L\to\infty$ and using Lemma~\ref{StandLif}
	\begin{align}
		& \limsup_{\epsilon\searrow 0} \frac{ \ln\big|\ln\big[\mathds N(\sqrt{\lambda^2+\beta^2}
			+ \epsilon) - \mathds N(\sqrt{\lambda^2+\beta^2})\big]\big|}{\ln\epsilon} 
			\nonumber\\
		& \hspace*{3cm}\le \limsup_{\epsilon\searrow0}\frac{\ln\Big| 
			\ln N_H\Big( \big[ (\sqrt{\lambda^2+\beta^2}+\epsilon)^2-\beta^2 \big]^{1/2}\Big)\Big|}{\ln\epsilon}\nonumber\\
		& \hspace*{3cm} = \limsup_{\widetilde{\epsilon}\searrow 0} \frac{\ln\left| 
			\ln N_H( \lambda + \widetilde{\epsilon})\right|}{\xi\ln\widetilde{\epsilon}} \nonumber\\ 
		& \hspace*{3cm}	
			\leq -\frac{d}{2\,\xi}
	\end{align}
	with $\xi = 1$ in all cases except the case of 
	$\lambda=0$ and $\beta\neq 0$, where $\xi=2$.
\end{proof}

\noindent
In the remaining part of this section we turn to the lower bound for Lifschitz tails.

\begin{proof}[Proof of Theorem~\ref{Lifschitz-lower}]

  We use Dirichlet-Neumann bracketing. Therefore  we define, 
  following \cite[Def.~4.6]{RandomBlockoperators}, the 
  \emph{Dirichlet-bracketing} restriction of the block operator as
  \begin{equation}
  	\Hh_L^+:=\begin{pmatrix} H_L^D & B\\B &-H_L^N \end{pmatrix} ,
  \end{equation}

  \noindent
  where $H_L^N$ and $H_L^D$ denote the restriction of $H$ to the cube $\Lambda_L$, $L\in\NN$, 
  with Neumann, respectively Dirichlet boundary conditions on the Laplacian; 
  for a precise definition see \cite[Sect.\ 5.2]{Kirsch:Invitation}.
  Setting $\widetilde H^D_L:=H^D_L-\lambda\one$, $\widetilde B:=B-\beta\one$ and 
  using Lemma~\ref{min-max-max}, we obtain for the $j$-th positive eigenvalue, $j=1,\ldots,|\Lambda_{L}|$,
        \begin{multline}
		 \lambda_j (\Hh_L^D) = \min_{\substack{\mathcal V\subset \ell^{2}(\Lambda_{L}): \\ \dim \mathcal V=j}}\ 
			\max_{\substack{f\in\mathcal V:\\ \left\|f\right\|=1}}\
			\max_{\substack{g\in \ell^{2}(\Lambda_{L}): \\ \left\|g\right\|=1}}
			 \Bigg\{ \frac{\langle f,\widetilde H^D_Lf \rangle - \langle g,\widetilde H^N_Lg\rangle}{2} \\
		+ \sqrt{ \bigg( \lambda+ \frac{\langle f,\widetilde H^D_Lf \rangle + \langle g,\widetilde H^N_Lg \rangle}{2}\bigg)^2 
				+ \big|\beta+ \langle f,\widetilde Bg\rangle\big|^2} \; \Bigg\}.\label{LifschitzIII}
	\end{multline}
  The elementary inequality 
  \begin{equation}
  	\sqrt{(\lambda+a)^2+(\beta+b)^2}\leq \sqrt{\lambda^2+\beta^2}+a+b
  \end{equation}
	holds for every $a,b \ge 0$ and $\lambda,\beta\in \RR$. 

  \noindent
  Together with the estimate $\big|\beta+ \langle f,\widetilde Bg\rangle\big| \le 
   |\beta| + \langle f,\widetilde B^{2} f\rangle^{1/2}$,
  this yields
   \begin{align}
    \lambda_j (\Hh_L^D) \le \sqrt{\lambda^2+\beta^2}   
    +  \min_{\substack{\mathcal V\subset \ell^{2}(\Lambda_{L}): \\ \dim \mathcal V=j}}\ 
			\max_{\substack{f\in\mathcal V:\\ \left\|f\right\|=1}}\ \Big\{
			 \langle f,\widetilde H_L^D f\rangle + \langle f,\widetilde B^2 f\rangle^{1/2} \Big\}.
   \end{align}

   \noindent
	On the other hand, \eqref{LifschitzIII} implies
	\begin{equation}
		\label{ev-lb}
	 	\lambda_j (\Hh_L^D) > \sqrt{\lambda^2+\beta^2} 
	\end{equation}
	for every $j=1,\ldots,|\Lambda_{L}|$. From this and Lemma ~\ref{min-max-max} we conclude that 
	$\EE \big[\mathds{N}_{\Hh_{L}^{+}}(\sqrt{\lambda^{2} + \beta^{2}})\big] =1/2$. 
        Similarly, using the symmetry of the spectrum and continuity of the integrated density of states (cf.\ the proof of \cite[Lemma 5.13]{Kirsch:Invitation}), we obtain $\EE \big[\mathds{N}(\sqrt{\lambda^{2} + \beta^{2}})\big] =1/2$.
  These two equalities and the estimate $\mathds{N}(E) \ge \EE 
  \big[\mathds{N}_{\Hh_{L}^{+}}(E)\big]$ for every 
	$E\in\RR$ \cite[Lemma 4.8(ii)]{RandomBlockoperators} yield
  \begin{align}
   	\mathds N(\sqrt {\lambda^2+\beta^2} &+\epsilon) - 
    \mathds N(\sqrt{\lambda^2+\beta^2}) \nn\\
    & \ge \EE \big[\mathds{N}_{\Hh_{L}^{+}}(\sqrt{\lambda^{2} + \beta^{2}} +\epsilon)\big] 
      - \EE \big[\mathds{N}_{\Hh_{L}^{+}}(\sqrt{\lambda^{2} + \beta^{2}})\big]  \nn\\
    & \ge \frac 1 {2|\Lambda_L|} \, \PP\left( \lambda_{1}  (\Hh_L^D) \in
    [\sqrt{\lambda^2+\beta^2},\sqrt{\lambda^2+\beta^2}+\epsilon[  \right) \nn\\
    &\ge \frac 1 {2|\Lambda_L|} \,\PP\left(\langle \psi,\widetilde H_L^D \psi\rangle + 
    	\langle \psi,\widetilde B^2 \psi\rangle^{ 1/ 2}< \epsilon \right) \label{Lifschitz4}
  \end{align}
 	for every $L\in \NN$, $\epsilon>0$ and every normalized test function $\psi \in l^2(\Lambda_L)$.
        
        \noindent
	Following \cite[Sect.\ 6.3]{Kirsch:Invitation}, we choose
   $\psi :=\frac 1 {\|\psi_1\|}\psi_1(n)$, where 
   $\psi_1(n) :=\frac L 2 - |n|_\infty$ for $n\in\Lambda_{L}$. This implies
   $\langle \psi, H_{0,L}^D \psi\rangle\le c_0 L^{-2}$ with some constant $c_{0}>0$.
   Next we choose $L$ to be the smallest integer such that
   \begin{equation}
      c_0L^{-2}<\epsilon/ 2
   \end{equation}
   and estimate
   \begin{align}
    \PP\Big(\langle \psi,\widetilde H_L^D \psi\rangle &+ 
    	\langle \psi,\widetilde B^2 \psi\rangle^{ 1/ 2}< \epsilon \Big) \nn\\
		&\ge  \PP\left(\langle \psi, (V -\lambda\one) \psi\rangle + 
    	\langle \psi, (B -\beta\one)^2 \psi\rangle^{ 1/ 2}< \epsilon/2 \right) \nn\\
    &\ge \PP \Big( \forall\ n \in\Lambda_{L}:\ V(n) -\lambda <\epsilon/4 \;\;\text{and~} \;
   		|B(n) -\beta| <\epsilon/4\Big)   \nn\\
    &=  \big\{\mu_{V}\big( [\lambda, \lambda +\epsilon/4[ \big)\big\}^{|\Lambda_{L}|} \;
        \big\{\mu_{B}\big( [\beta - \epsilon/4, \beta +\epsilon/4[ \big)\big\}^{|\Lambda_{L}|} .
   \end{align}
 
 \noindent
 The theorem now follows with \eqref{Lifschitz4} and the assumption \eqref{Lifschitz6}.
\end{proof}

%
\section{Proof of localization}
%

Our proof relies on the bootstrap multi-scale analysis introduced in \cite{GeKl01}, which 
yields complete localization in a rather general setting. 
Apart from one natural adaptation for multiplication operators -- see below --  we are only left to check whether the assumptions on the random operator are fulfilled by our model. We start with some notions.

\begin{Def}
We introduce the \emph{boundary} of a cube $\Lambda \subset\Zd$ by 
\begin{equation}
	\partial\Lambda := \big\{ (n,m)\in\Zd \times \Zd:\  |n-m| =1, \; n\in\Lambda,\ m\notin\Lambda 
	\text{\; or\; }  n\notin\Lambda,\, m\in\Lambda \big\},
\end{equation}
its \emph{inner boundary} by 
\begin{equation}
 	\partial^i\Lambda := \big\{ n \in \Lambda : \exists\; m \not\in \Lambda 
			\text{ such that } |n - m| =1  \big\}
\end{equation}
and its \emph{outer boundary} by
\begin{equation}
 			\partial^o\Lambda := \big\{ n \not\in \Lambda : \exists\; m \in \Lambda 
			\text{ such that } |n - m| =1  \big\}.
\end{equation}
We write $\Lambda_1\sqsubset\Lambda_2$ if $\partial \Lambda_1\subset\Lambda_2\times\Lambda_2$.
Furthermore for $\Lambda_1\sqsubset\Lambda_2\subseteq \mathbb Z^d$ we define the \emph{boundary operator} 
$\Gamma_{\Lambda_1}^{\Lambda_2}\equiv\Gamma_{\Lambda_1}$ on $\ell^{2}(\Lambda_{2})$ in terms of its matrix elements
\begin{equation}
	\langle\delta_{n},\Gamma_{\Lambda_1}\delta_{m}\rangle := \ccases{-1,  & (n,m)\in\partial\Lambda_1, \\
		0, & (n,m)\in (\Lambda_{2} \times \Lambda_{2}) \setminus \partial\Lambda_1.
	}
\end{equation}
We lift $\Gamma_{\Lambda_{1}}$ to a bounded operator on 
$\ell^{2}(\Lambda_{2}) \oplus \ell^{2}(\Lambda_{2})$ by setting 

\begin{equation}
 	\bbGamma_{\Lambda_{1}} := \Gamma_{\Lambda_{1}} \oplus (-\Gamma_{\Lambda_{1}}).
\end{equation}
In contrast, given subsets $\Lambda \subset \Lambda'\subseteq\Zd$, we lift the multiplication operator $1_{\Lambda}$ on $\ell^{2}(\Lambda')$, corresponding to the indicator function of $\Lambda$, to the sum space $\ell^{2}(\Lambda') \oplus \ell^{2}(\Lambda')$ by setting 
\begin{equation}
 	\11_{\Lambda} := 1_{\Lambda} \oplus 1_{\Lambda}.
\end{equation}
In slight abuse of notation we also write $\mathds 1_n:=\mathds 1_{\{n\}}$ for $n\in\Zd$. 
Finally, given an energy $E \not\in\sigma(\Hh_{\Lambda})$, we use the abbreviation $\mathds G_\Lambda(E):=\left(\Hh_\Lambda-E\right)^{-1}$ for the resolvent of $\Hh_{\Lambda}$.
\end{Def}

\begin{proof}[Proof of Theorem~\ref{Anderson Local}]
 	We apply \cite[Thm.\ 3.8]{GeKl01} on the Hilbert space $\mathcal{H}^{2}$, with the random operator 
	$\Hh$ and with $\11_{\Lambda}$ playing the role of the multiplication operator 
	$\chi_\Lambda$ in \cite{GeKl01}.
	The deterministic Assumptions \emph{SLI} and \emph{EDI} will be checked in Lemmas~\ref{sli} 
	and Lemma~\ref{edi} below. 
	We note a slight structural difference between the statement of Lemma~\ref{edi} and the 
	\emph{EDI}-property in \cite{GeKl01}: the factor $\|\mathds 1_{\partial^o\Lambda}\Psi\|$ 
	in \eqref{eq:edi} evaluates $\Psi$ outside the cube $\Lambda$. However, this factor plays 
	only a role in the proof of Lemma 4.1 in \cite{GeKl01}, and Eq.\ (4.3) -- (4.4) in that proof 
	show that this difference is irrelevant. 

\noindent       
	The next important hypothesis of Thm.~3.8 in \cite{GeKl01} is the Wegner Assumption \emph{W}, which follows from Theorem~\ref{Wegner I} or~\ref{Wegner II} for our model with $b=1$ (more precisely from the finite-volume estimates -- e.g.\ \eqref{wegner-loc} -- in the proofs of those theorems). 
The remaining assumptions \emph{IAD}, \emph{NE} and \emph{SGEE} are obviously correct because we work with 
a discrete model with i.i.d.\ random coupling constants.  
Finally, the initial-scale estimate follows from  Theorem~\ref{Initial-scale} below, 
see also Remark 3.7 in \cite{GeKl01}. 

\noindent
Having collected all the aforementioned properties, Cor.\ 3.12 of \cite{GeKl01} implies that the claim of Thm.~3.8 holds for all energies in some interval $I:=[-a,a]$, where $a > \sqrt{\lambda^{2}+ \beta^{2}}$ 
so that $I$ overlaps with the almost-sure spectrum of $\Hh$ according to Lemma~\ref{Lemma:Spectrum1}. The claim of Thm.~3.8 then reads 
\begin{equation}
	\label{gk-statement}
 	\EE \left( \sup_{\|f\|_{\infty} \le 1}  \big\| \11_{n} 1_I(\Hh) f(\Hh) \11_{m} \big\|_{HS}^{2} \right) 
	\le C_{\zeta} \e^{-|n-m|^\zeta}
\end{equation}
for all $n,m \in\Zd$. Here, $\|\mathds{A}\|_{HS}$ is the Hilbert-Schmidt norm of an operator $\mathds{A}$ on $\mathcal{H}^{2}$. To get to our formulation in \eqref{eq:dyn-loc} we remark that 
\begin{equation}
	\|\11_n \mathds A \11_m\|_{HS} = \|\mathds A(n,m)\|_{2\times2},
\end{equation}
where, on the right-hand side, we use the notation introduced in \eqref{matrix-element}, 
and $\|\boldsymbol\cdot\|_{2\times 2}$ stands for the Hilbert-Schmidt norm of a $2\times2$-matrix. 
Replacing the latter by any other norm on the $2\times2$-matrices as in \eqref{eq:dyn-loc},
merely requires a possible adjustment of the constant $C_{\zeta}$.
\end{proof}

\noindent
Next we deal with the deterministic assumptions required by the bootstrap multi-scale analysis. 
The first one is a consequence of the geometric resolvent equation \eqref{gre}. 

\begin{Lemma}[\emph{SLI}]
\label{sli}
Let $\Lambda_1\sqsubset \Lambda_2 \sqsubset \Lambda_3$. Then we have for $E\notin(\sigma(\Hh_{\Lambda_2})\cup\sigma(\Hh_{\Lambda_3}))$ the inequality
\begin{equation}
\left\| \mathds 1_{\partial^i\Lambda_3} \mathds G_{\Lambda_3}(E) \mathds 1_{\Lambda_1} \right\|\leq
 \gamma \left\| \mathds 1_{\partial^i\Lambda_3}\mathds G_{\Lambda_3}(E)\mathds 1_{\partial^o \Lambda_2}\right\|    \left\| \mathds 1_{\partial^i\Lambda_2}\mathds G_{\Lambda_2}(E) \mathds 1_{\Lambda_1} \right\|,
\end{equation}
where $\gamma>0$ depends only on the space dimension $d$ and the norm is the operator norm.
\end{Lemma}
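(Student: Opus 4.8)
The plan is to derive the inequality from the geometric resolvent equation, exactly as in the scalar Anderson-model case, the only new feature being the block structure carried by $\bbGamma_{\Lambda}$. First I would record the geometric resolvent equation: for $\Lambda_2 \sqsubset \Lambda_3$ and $E \notin \sigma(\Hh_{\Lambda_2}) \cup \sigma(\Hh_{\Lambda_3})$,
\begin{equation}
	\label{gre}
	\mathds G_{\Lambda_3}(E) \,\11_{\Lambda_2}
	= \11_{\Lambda_2}\, \mathds G_{\Lambda_2}(E)
	+ \mathds G_{\Lambda_3}(E)\, \bbGamma_{\Lambda_2}\, \mathds G_{\Lambda_2}(E),
\end{equation}
which follows from the resolvent identity together with the observation that $\Hh_{\Lambda_3}\11_{\Lambda_2} - \11_{\Lambda_2}\Hh_{\Lambda_2} = \bbGamma_{\Lambda_2}$ on $\ell^2(\Lambda_3)\oplus\ell^2(\Lambda_3)$; this is immediate from the definition of the Laplacian with simple boundary conditions (the diagonal blocks $H_{0,L}$ differ exactly by the boundary hopping terms $\pm\Gamma_{\Lambda_2}$, while $V$ and $B$ are genuine multiplication operators and hence commute with $\11_{\Lambda_2}$). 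I would then multiply \eqref{gre} on the left by $\11_{\partial^i\Lambda_3}$ and on the right by $\11_{\Lambda_1}$, and use $\11_{\partial^i\Lambda_3}\11_{\Lambda_2}\mathds G_{\Lambda_2}(E)\11_{\Lambda_1}=0$, which holds because $\Lambda_2\sqsubset\Lambda_3$ forces $\partial^i\Lambda_3 \cap \Lambda_2 = \varnothing$. This leaves
\begin{equation}
	\11_{\partial^i\Lambda_3}\mathds G_{\Lambda_3}(E)\11_{\Lambda_1}
	= \11_{\partial^i\Lambda_3}\mathds G_{\Lambda_3}(E)\,\bbGamma_{\Lambda_2}\,\mathds G_{\Lambda_2}(E)\11_{\Lambda_1}.
\end{equation}

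Next I would insert the projection $\11_{\partial^o\Lambda_2}$ appropriately. The key algebraic point is that $\bbGamma_{\Lambda_2}$ is supported on the boundary $\partial\Lambda_2$, so $\bbGamma_{\Lambda_2} = \11_{\partial^o\Lambda_2}\bbGamma_{\Lambda_2}\11_{\partial^i\Lambda_2} + \11_{\partial^i\Lambda_2}\bbGamma_{\Lambda_2}\11_{\partial^o\Lambda_2}$; the second term is killed on the right by $\mathds G_{\Lambda_2}(E)\11_{\Lambda_1}$ since $\Lambda_1\sqsubset\Lambda_2$ implies $\partial^o\Lambda_2\cap\Lambda_1=\varnothing$ and $\11_{\partial^o\Lambda_2}\mathds G_{\Lambda_2}(E)\11_{\Lambda_1}$ need not vanish — but $\11_{\partial^o\Lambda_2}$ lies outside $\Lambda_2$, where $\mathds G_{\Lambda_2}(E)$ acts, so actually $\11_{\partial^o\Lambda_2}\mathds G_{\Lambda_2}(E) = 0$. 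Hence only the first term survives, giving
\begin{equation}
	\11_{\partial^i\Lambda_3}\mathds G_{\Lambda_3}(E)\11_{\Lambda_1}
	= \11_{\partial^i\Lambda_3}\mathds G_{\Lambda_3}(E)\11_{\partial^o\Lambda_2}\,\bbGamma_{\Lambda_2}\,\11_{\partial^i\Lambda_2}\mathds G_{\Lambda_2}(E)\11_{\Lambda_1}.
\end{equation}
Taking operator norms and using submultiplicativity together with the bound $\|\bbGamma_{\Lambda_2}\| = \|\Gamma_{\Lambda_2}\| \le \gamma$ for a constant $\gamma$ depending only on $d$ (each row and column of $\Gamma_{\Lambda_2}$ has at most $2d$ nonzero entries, each equal to $-1$, so $\|\Gamma_{\Lambda_2}\|\le 2d$) yields the claimed inequality.

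The only point that requires genuine care — and what I expect to be the main (though modest) obstacle — is bookkeeping the supports so that the two vanishing identities above are correctly justified, in particular distinguishing $\11_{\partial^o\Lambda_2}$ (which lives in $\Lambda_3\setminus\Lambda_2$, where $\mathds G_{\Lambda_2}$ is zero) from $\11_{\partial^i\Lambda_2}$, and making sure the relation $\sqsubset$ as defined in the paper really does give $\partial^i\Lambda_3\cap\Lambda_2=\varnothing$ and $\partial^o\Lambda_2\cap\Lambda_1=\varnothing$. Everything else is the standard geometric-resolvent argument, and the block structure contributes nothing beyond replacing $\Gamma_{\Lambda_2}$ by $\Gamma_{\Lambda_2}\oplus(-\Gamma_{\Lambda_2})$, whose norm is the same.
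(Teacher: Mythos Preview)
Your proposal is correct and follows essentially the same approach as the paper: the paper also derives the identity $\11_{\partial^i\Lambda_3}\mathds G_{\Lambda_3}(E)\11_{\Lambda_1} = -\11_{\partial^i\Lambda_3}\mathds G_{\Lambda_3}(E)\11_{\partial^o\Lambda_2}\bbGamma_{\Lambda_2}\11_{\partial^i\Lambda_2}\mathds G_{\Lambda_2}(E)\11_{\Lambda_1}$ from the decomposition $\Hh_{\Lambda_3} = (\Hh_{\Lambda_2} \oplus \Hh_{\Lambda_3\setminus\Lambda_2}) + \bbGamma_{\Lambda_2}$ together with $\bbGamma_{\Lambda_2}\11_{\Lambda_2} = \11_{\partial^o\Lambda_2}\bbGamma_{\Lambda_2}\11_{\partial^i\Lambda_2}$, then sets $\gamma := \|\bbGamma_{\Lambda_2}\|$. (Note the minus sign: your geometric resolvent equation \eqref{gre} should read $\mathds G_{\Lambda_3}\11_{\Lambda_2} = \11_{\Lambda_2}\mathds G_{\Lambda_2} - \mathds G_{\Lambda_3}\bbGamma_{\Lambda_2}\mathds G_{\Lambda_2}$, but this is harmless for the norm estimate.)
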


\begin{proof}
The identity
\begin{equation}
	\label{decomp}
	\Hh_{\Lambda_3} = (\Hh_{\Lambda_2} \oplus \Hh_{\Lambda_3\backslash \Lambda_2}) + 
	\bbGamma _{\Lambda_2}
\end{equation}
and the resolvent equation imply
\begin{align}
	\label{gre}
	\mathds 1_{\partial^i\Lambda_3} \mathds G_{\Lambda_3}(E)  \mathds 1_{\Lambda_1}
	&= -\mathds 1_{\partial^i \Lambda_3} 
        \mathds G_{\Lambda_3}(E) \,\bbGamma_{\Lambda_2} \mathds G_{\Lambda_2}(E)  \mathds 1_{\Lambda_1} \\
  &= -\mathds 1_{\partial^i \Lambda_3} 
        \mathds G_{\Lambda_3}(E) \mathds 1_{\partial^o \Lambda_2}  \bbGamma_{\Lambda_2} \mathds 1_{\partial^i \Lambda_2}  \mathds G_{\Lambda_2}(E) \mathds 1_{\Lambda_1}, \nonumber
\end{align}
where we used that $ \bbGamma_{\Lambda_2} \mathds 1_{\Lambda_2} =
\mathds 1_{\partial^o \Lambda_2}  \bbGamma_{\Lambda_2} \mathds 1_{\partial^i \Lambda_2}$.
Taking the norm and observing that $\gamma := \|\bbGamma_{\Lambda_2} \|$ depends only on the space dimension $d$, yields the statement. 
\end{proof}

\noindent
A similar argument proves

\begin{Lemma}[\emph{EDI}]
\label{edi}
Let $\Psi$ be a generalized eigenfunction of $\Hh$ with generalized eigenvalue $E$ and let $\gamma$ be the constant from the previous lemma.
Then we have for any $\Lambda$ such that $E\not\in \sigma(\Hh_\Lambda)$ and $n\in\Lambda$
\begin{equation}
\label{eq:edi}
\left\| \mathds 1_{n}\Psi \right\| \leq \gamma \left\| \mathds 1_{n} \mathds G_\Lambda(E) \mathds 1_{\partial^i\Lambda}\right\|  \left\|\mathds 1_{\partial^o\Lambda}\Psi\right\|.
\end{equation}
\end{Lemma}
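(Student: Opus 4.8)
The plan is to follow the same strategy as in the proof of Lemma~\ref{sli}, with the resolvent equation for two finite-volume operators replaced by the generalized eigenvalue equation $(\Hh-E)\Psi=0$. The one analytic point to bear in mind is that $\Psi$ is only polynomially bounded rather than square-summable; but $\Lambda$ is a finite cube, so the vectors $\11_{\Lambda}\Psi$, $\11_{\partial^{i}\Lambda}\Psi$ and $\11_{\partial^{o}\Lambda}\Psi$ all have finite support, every expression below is a finite sum, and $\mathds G_{\Lambda}(E)$ acts only on genuine $\ell^{2}$-vectors.

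First I would record the decomposition
\begin{equation}
	\Hh = \big( \Hh_{\Lambda} \oplus \Hh_{\Zd\setminus\Lambda} \big) + \bbGamma_{\Lambda},
\end{equation}
the analogue of \eqref{decomp}, where $\Hh_{\Zd\setminus\Lambda}$ is the restriction of $\Hh$ to $\ell^{2}(\Zd\setminus\Lambda)\oplus\ell^{2}(\Zd\setminus\Lambda)$ with simple boundary conditions and $\bbGamma_{\Lambda}$ is regarded as a (finite-rank, hence bounded) operator on $\mathcal H^{2}$ with its non-vanishing matrix elements supported on $\partial\Lambda$. This identity holds because the off-diagonal blocks $B$ are multiplication operators and thus couple no distinct sites, so that the only matrix elements of $\Hh$ destroyed upon restriction are the nearest-neighbour hopping terms of $\pm H_{0}$ across $\partial\Lambda$, which are exactly those reinstated by $\bbGamma_{\Lambda}=\Gamma_{\Lambda}\oplus(-\Gamma_{\Lambda})$. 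Applying $\11_{\Lambda}$ to $(\Hh-E)\Psi=0$ and using $\11_{\Lambda}\bbGamma_{\Lambda}=\11_{\partial^{i}\Lambda}\bbGamma_{\Lambda}\11_{\partial^{o}\Lambda}$ (for $(n,m)\in\partial\Lambda$ with $n\in\Lambda$ one has $n\in\partial^{i}\Lambda$ and $m\in\partial^{o}\Lambda$) gives
\begin{equation}
	(\Hh_{\Lambda}-E)\,\11_{\Lambda}\Psi = -\,\11_{\partial^{i}\Lambda}\bbGamma_{\Lambda}\11_{\partial^{o}\Lambda}\Psi .
\end{equation}
Since $E\notin\sigma(\Hh_{\Lambda})$, I can invert $\Hh_{\Lambda}-E$, multiply by $\11_{n}$ with $n\in\Lambda$, and use that the right-hand side is supported in $\partial^{i}\Lambda$ to obtain $\11_{n}\Psi = -\,\11_{n}\mathds G_{\Lambda}(E)\,\11_{\partial^{i}\Lambda}\bbGamma_{\Lambda}\11_{\partial^{o}\Lambda}\Psi$. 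Taking operator norms and bounding $\|\bbGamma_{\Lambda}\|$ by the constant $\gamma$ — which, exactly as in Lemma~\ref{sli}, satisfies $\|\bbGamma_{\Lambda}\|\le 2d$ and therefore depends only on the dimension — produces \eqref{eq:edi}.

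I do not expect any real obstacle; the argument is essentially the transpose of Lemma~\ref{sli}. The only places that merit a sentence of care are the finite-support observation above, which is what legitimizes pushing the non-$\ell^{2}$ function $\Psi$ through these local identities, and the inner/outer boundary bookkeeping, i.e.\ making sure that $\bbGamma_{\Lambda}$ ends up sandwiched with $\11_{\partial^{i}\Lambda}$ on the side facing the resolvent and $\11_{\partial^{o}\Lambda}$ on the side facing $\Psi$, just as in the second line of \eqref{gre}.
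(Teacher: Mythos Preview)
Your argument is correct and is essentially identical to the paper's proof: both use the decomposition $\Hh=(\Hh_{\Lambda}\oplus\Hh_{\Zd\setminus\Lambda})+\bbGamma_{\Lambda}$ applied to the eigenvalue equation, invert $\Hh_{\Lambda}-E$ on the $\Lambda$-component, and then insert the identity $\11_{\Lambda}\bbGamma_{\Lambda}=\11_{\partial^{i}\Lambda}\bbGamma_{\Lambda}\11_{\partial^{o}\Lambda}$ before taking norms. Your additional remarks on why the decomposition holds and on the finite-support issue for the polynomially bounded $\Psi$ are correct and merely make explicit what the paper leaves implicit.
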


\begin{proof}
We infer from \eqref{decomp} with $\Lambda_{3} = \Zd$ and $\Lambda_{2} = \Lambda$ that
\begin{equation}
 	(\Hh_{\Lambda} \oplus \Hh_{\Zd\setminus\Lambda}- E) \Psi = - \bbGamma_{\Lambda}  \Psi. 
\end{equation}
Since $E\not\in \sigma(\Hh_\Lambda)$ and $n\in\Lambda$, this implies 
$\11_{n}\Psi = -\11_{n}\mathds{G}_{\Lambda}(E) \bbGamma_{\Lambda}\Psi $. The identity 
$\11_{\Lambda}\bbGamma_{\Lambda} = \11_{\partial^{i}\Lambda} \bbGamma_{\Lambda} \11_{\partial^{o}\Lambda}$ 
and taking norms finishes the proof.
\end{proof}

\noindent
The remaining part of this section is concerned with the verification of the initial-scale estimate.

\begin{Def}
 	Let $\theta >0$ and $E\in \RR$. A cube $\Lambda_{L} \subset\Zd$, $L\in 6\NN$, is 
	\emph{$(\theta,E)$-suitable}, if $E \not\in \sigma(\Hh_{L})$ and
	\begin{equation}
 		\| \11_{\partial^{i}\!\Lambda_{L}} \mathds{G}_{\Lambda_{L}}(E) \11_{\Lambda_{L/3}} \| < L^{-\theta}.
	\end{equation}
\end{Def}

\begin{Theorem}[Initial estimate]
	\label{Initial-scale}
	Consider the random block operator $\Hh$ of Definition~\ref{Def:Hh} and assume the hypotheses of 
	Theorem~\ref{LifschitzII}. Then there exist constants $\theta >d$ and $p>0$ such that for 
	every length $L \in 6\NN$ 	sufficiently large the following holds: there 
	exists an energy $a_{L} > \sqrt{\lambda^{2} + \beta^{2}}$ such that 
	\begin{equation}
		\label{suit-prob}
		\PP \big( \Lambda_{L} \text{ is $(\theta,E)$-suitable}\big) > 1- L^{-p}
	\end{equation}
	for every energy $E \in [-a_{L}, a_{L}]$. 
\end{Theorem}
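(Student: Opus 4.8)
The plan is to produce, for each large $L\in6\NN$, a single event of probability $>1-L^{-p}$ on which $\Lambda_L$ is simultaneously $(\theta,E)$-suitable for every $E$ in an interval $[-a_L,a_L]$. The probability bound will come from the Lifschitz estimate for the diagonal block $H$ (Lemma~\ref{StandLif}), passed to $\Hh_L$ by the pathwise inequality of the proof of Theorem~\ref{LifschitzII}, and the resolvent decay from the Combes--Thomas estimate.

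\emph{The good event.} Write $\lambda_1(\Hh_L)$ for the smallest positive eigenvalue of $\Hh_L$, and recall from the proof of Theorem~\ref{LifschitzII} that $\mathds N_{\Hh_L}(\sqrt{\lambda^2+\beta^2})=\tfrac12$ and $\lambda_1(\Hh_L)>\sqrt{\lambda^2+\beta^2}$ $\PP$-a.s. Fix $\epsilon_L>0$ (chosen below), put $a_L:=\sqrt{\lambda^2+\beta^2}+\epsilon_L/2$, and let $\mathcal A_L:=\{\lambda_1(\Hh_L)\ge\sqrt{\lambda^2+\beta^2}+\epsilon_L\}$. Since the open gap $]-\sqrt{\lambda^2+\beta^2},\sqrt{\lambda^2+\beta^2}[$ is free of spectrum of $\Hh_L$ (Lemma~\ref{doublegap}) and $\sigma(\Hh_L)$ is symmetric about $0$ (Remark~\ref{symm-spec}), on $\mathcal A_L$ the point of $\sigma(\Hh_L)$ nearest to any $E\in[-a_L,a_L]$ is $\pm\lambda_1(\Hh_L)$, so $\dist(E,\sigma(\Hh_L))=\lambda_1(\Hh_L)-|E|\ge\epsilon_L/2$; in particular $E\notin\sigma(\Hh_L)$. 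Because $\Hh_L$ is self-adjoint, nearest-neighbour in $\Zd$, and of norm bounded uniformly in $L$ by Lemma~\ref{Lemma:Spectrum1}, the Combes--Thomas estimate (see, e.g., \cite{Kirsch:Invitation}) yields constants $c_d,C_d>0$, depending only on $d$ and that norm bound, with
\begin{equation*}
  \big\|\11_{\partial^i\Lambda_L}\,\mathds G_{\Lambda_L}(E)\,\11_{\Lambda_{L/3}}\big\|
  \le\frac{C_d}{\epsilon_L}\,\e^{-c_d\,\epsilon_L\,(L/3-1)}
\end{equation*}
for every $E\in[-a_L,a_L]$ on $\mathcal A_L$, as $\dist(\Lambda_{L/3},\partial^i\Lambda_L)\ge L/3-1$. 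Thus $\mathcal A_L$ is contained in the event that $\Lambda_L$ is $(\theta,E)$-suitable for all $E\in[-a_L,a_L]$, provided the right-hand side is $<L^{-\theta}$.

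\emph{The probability bound.} By the pathwise inequality obtained in the proof of Theorem~\ref{LifschitzII}, the number of positive eigenvalues of $\Hh_L$ below $\sqrt{\lambda^2+\beta^2}+\epsilon_L$ is at most $|\Lambda_L|\,N_{H_L}(\tau(\epsilon_L))$, with $\tau(\epsilon):=\big[(\sqrt{\lambda^2+\beta^2}+\epsilon)^2-\beta^2\big]^{1/2}$. Hence, by Markov's inequality and the standard Dirichlet-bracketing bound $\EE[N_{H_L}(t)]\le N_H(t)$ for the Anderson model ($H_L$ carries the simple, i.e.\ Dirichlet, boundary conditions; cf.\ \cite[Sect.~5]{Kirsch:Invitation}),
\begin{equation*}
  \PP(\Omega\setminus\mathcal A_L)\le\PP\big(\min\sigma(H_L)\le\tau(\epsilon_L)\big)
  \le|\Lambda_L|\,\EE\big[N_{H_L}(\tau(\epsilon_L))\big]\le L^d\,N_H\big(\tau(\epsilon_L)\big).
\end{equation*}
Now $\tau(\epsilon)-\lambda\asymp\epsilon$ as $\epsilon\searrow0$ when $\lambda>0$ or $\lambda=\beta=0$, while $\tau(\epsilon)-\lambda\asymp\sqrt\epsilon$ when $\lambda=0\neq\beta$; and by Lemma~\ref{StandLif}, for any $c<d/2$ one has $N_H(\lambda+\delta)\le\e^{-\delta^{-c}}$ for all small $\delta>0$. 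I would therefore choose $\epsilon_L$ so that $\tau(\epsilon_L)-\lambda$ decays like a suitable negative power of $\ln L$ — concretely $\epsilon_L\asymp(\ln L)^{-2/d}$ in the first case and $\epsilon_L\asymp(\ln L)^{-4/d}$ in the second — with implicit constant small enough that $L^d\,\e^{-(\tau(\epsilon_L)-\lambda)^{-c}}<L^{-p}$ for all large $L$ and some fixed $p>0$; this is \eqref{suit-prob}, and $a_L=\sqrt{\lambda^2+\beta^2}+\epsilon_L/2>\sqrt{\lambda^2+\beta^2}$. With this choice $\epsilon_L\to0$ only logarithmically, so $\epsilon_L(L/3-1)\gg\ln L$ and the Combes--Thomas bound decays faster than any negative power of $L$; in particular it is $<L^{-\theta}$ for any prescribed $\theta>d$ once $L\in6\NN$ is large, which completes the argument.

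\emph{Main obstacle.} The one genuinely delicate point is exactly this matching of scales: the Lifschitz input forces the half-width $a_L-\sqrt{\lambda^2+\beta^2}$ to shrink with $L$, and one must verify — as above — that the Combes--Thomas resolvent decay, whose rate equals this shrinking distance, still comfortably beats the polynomial threshold $L^{-\theta}$ at the band edge. The sign conventions for $\beta$ in the hypothesis — cases (1)--(3) of Theorem~\ref{LifschitzII}, and the degenerate case $\lambda=\beta=0$, where $\sqrt{\lambda^2+\beta^2}$ is not a spectral edge — are handled verbatim as in the proof of Theorem~\ref{LifschitzII}, using in case (2) the unitary equivalence of $\big(\begin{smallmatrix}H&B\\B&-H\end{smallmatrix}\big)$ and $\big(\begin{smallmatrix}H&-B\\-B&-H\end{smallmatrix}\big)$.
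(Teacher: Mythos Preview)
Your overall architecture matches the paper's: define a good event on which the spectral gap of $\Hh_L$ around $\sqrt{\lambda^2+\beta^2}$ is wide enough, use Combes--Thomas to get resolvent decay uniformly in $E\in[-a_L,a_L]$, and reduce the probability bound to one on $\min\sigma(H_L)$ via the eigenvalue comparison \eqref{Lifschitz:Label}. The difference lies in how you obtain the probability bound and in the scale you pick for $\epsilon_L$.

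There is a genuine gap in your bracketing step. You claim $\EE[N_{H_L}(t)]\le N_H(t)$ and justify it by identifying the simple boundary conditions with ``Dirichlet''. In the discrete setting of \cite[Sect.~5.2]{Kirsch:Invitation} these are \emph{not} the same: simple boundary conditions are the bare matrix restriction, whereas the Dirichlet Laplacian $H_{0,L}^D$ adds a positive boundary potential, and only the latter enjoys the superadditivity under box decomposition that yields $\EE[N_{H_L^D}(t)]\le N_H(t)$. Since $H_L\le H_L^D$, one has $N_{H_L}(t)\ge N_{H_L^D}(t)$, which goes the wrong way for your purposes. A smaller but related issue: Lemma~\ref{StandLif} only gives $N_H(\lambda+\delta)\le\e^{-\delta^{-c}}$ for $c$ \emph{strictly} below $d/2$, so with your choice $\epsilon_L\asymp(\ln L)^{-2/d}$ the quantity $L^d\,\e^{-(\tau(\epsilon_L)-\lambda)^{-c}}$ does not tend to zero; the exponent, not the implicit constant, would have to be adjusted.

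The paper sidesteps all of this by invoking the finite-volume probability estimate of Lemma~\ref{Lemma:Initial} (i.e.\ \cite[Eq.~(11.23)]{Kirsch:Invitation}) directly: for any $C,p>0$ and $L$ large, $\PP\big(\min\sigma(H_L)\le\lambda+CL^{-1/2}\big)\le L^{-p}$. This is in fact the estimate from which the IDS bound of Lemma~\ref{StandLif} is \emph{derived}, so your route through the IDS is backwards. With this stronger input the paper takes the polynomial gap $\epsilon_L\asymp L^{-1/2}$ and sets $a_L=\sqrt{\lambda^2+\beta^2}+L^{-1/2}$; the Combes--Thomas rate then scales like $L^{-1/2}\cdot L=L^{1/2}$, which beats any power $L^{-\theta}$ with room to spare, and the scale-matching you flag as the ``main obstacle'' is no longer delicate at all.
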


\noindent
We use Lifschitz tails at the internal bad edges to prove Theorem~\ref{Initial-scale}. 
Lifschitz tails arise from a small probability for finding an eigenvalue close to the spectral edge. This mechanism also yields the high probability for the event in \eqref{suit-prob}.
As in the proof of Lifschitz tails for $\Hh$ in Sect.~\ref{sec:lif}, we will reduce this to a corresponding statement for $H$.

\begin{Lemma}[Lifschitz-tail estimate \protect{\cite[Eq.\ (11.23)]{Kirsch:Invitation}}]
	\label{Lemma:Initial}
	Let $H$ be the discrete random Schr\"o\-dinger operator of the Anderson model as in 
	Definition~\ref{Def:Hh}. Assume in addition that the single-site probability measure $\mu_{V}$ 
	is not concentrated in a single point and let $\lambda :=  \inf\supp \mu_{V}$ 
	be the infimum of the almost-sure spectrum of $H$. Then, given any $C,p>0$, we have for every 
	$L\in\NN$ sufficiently large
	\begin{equation}
		\PP\big( \inf \sigma(H_{L}) \leq \lambda + C L^{-1/2} \big)\leq \frac{1}{L^p}.
	\end{equation}
\end{Lemma}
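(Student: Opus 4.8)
\textbf{Proof proposal for Lemma~\ref{Lemma:Initial}.}

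The plan is to deduce the finite-volume probability bound $\PP(\inf\sigma(H_L) \le \lambda + CL^{-1/2}) \le L^{-p}$ from the known macroscopic Lifschitz-tail estimate for the Anderson model in Lemma~\ref{StandLif}. The idea is the standard Lifschitz-tails-meet-Wegner route: first use a finite-volume Lifschitz-tail-type estimate (which is the genuine content of the proof of Lemma~\ref{StandLif}; see \cite[Chapter~6]{Kirsch:Invitation}) to bound $\EE[\mathds{N}_{H_{L'}}(\lambda + \epsilon)]$ from above by $\exp(-c\,\epsilon^{-d/2})$ for all sufficiently small $\epsilon>0$ and all $L' \ge L'_0(\epsilon)$, and then relate $\PP(\inf\sigma(H_L) \le \lambda + \epsilon')$ to such an expectation. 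Throughout, one exploits that $H$ has the non-negativity normalization $\inf\supp\mu_V = \lambda = \inf\sigma(H)$ $\PP$-a.s., and that the hypothesis ``$\mu_V$ not concentrated in a single point'' is precisely what makes the Lifschitz exponent $d/2$ available.

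The key steps, in order, are as follows. First, fix the target exponent $p>0$ and the constant $C>0$ as given, and set $\epsilon := \epsilon(L) := C L^{-1/2}$, so $\epsilon \searrow 0$ as $L\to\infty$. Second, invoke the event inclusion: if $\inf\sigma(H_L) \le \lambda + \epsilon$ then $H_L$ has at least one eigenvalue in $[\lambda, \lambda + \epsilon]$ (using $\inf\sigma(H_L) \ge \inf\sigma(H) = \lambda$ since simple/Dirichlet-type boundary conditions and $V \ge \lambda$ give $H_L \ge \lambda$), hence $\mathds{N}_{H_L}(\lambda + \epsilon) \ge \frac{1}{|\Lambda_L|}$ on that event; a Chebyshev step then gives $\PP(\inf\sigma(H_L) \le \lambda + \epsilon) \le |\Lambda_L|\,\EE[\mathds{N}_{H_L}(\lambda+\epsilon)]$. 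Third, feed in the quantitative finite-volume Lifschitz-tail bound $\EE[\mathds{N}_{H_L}(\lambda + \epsilon)] \le \exp(-c\,\epsilon^{-d/2})$, valid for all $L$ large enough depending on $\epsilon$; since our $\epsilon$ scales like $L^{-1/2}$, the exponent $\epsilon^{-d/2}$ scales like $L^{d/4}$, so the right-hand side is $|\Lambda_L|\exp(-c'\,L^{d/4}) = L^d \cdot L^{d}\cdot \exp(-c'L^{d/4})$ up to constants, which is $o(L^{-p})$ for every $p>0$. This yields the claim for all sufficiently large $L\in\NN$.

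The main obstacle is making precise the finite-volume Lifschitz-tail input in a form with the scale $L$ and the energy $\epsilon$ coupled as $\epsilon \sim L^{-1/2}$, rather than the macroscopic statement of Lemma~\ref{StandLif} which is only asymptotic in $\mathds{N}$. The cleanest fix is simply to cite \cite[Eq.~(11.23)]{Kirsch:Invitation} directly, which is stated in exactly this finite-volume form; indeed the lemma as worded already attributes the result to that reference, so the ``proof'' is essentially a pointer plus the remark that the normalization $\lambda = \inf\sigma(H)$ and the hypothesis on $\mu_V$ match those required there. If one instead wants a self-contained argument, one reproves the finite-volume bound by the Temple-inequality or Dirichlet-bracketing method of \cite[Sect.~6.3]{Kirsch:Invitation}: partition $\Lambda_L$ into subcubes of side length $\ell \sim \epsilon^{-1/2}$, use Neumann bracketing to get $\inf\sigma(H_L^N) \ge \min$ over subcubes of $\inf\sigma$ of the subcube operator, bound the probability that a fixed subcube has ground-state energy below $\lambda + 2\epsilon$ by the probability that all its potential values lie within $\epsilon$ of $\lambda$ (a constant strictly less than $1$ raised to the power $\ell^d \sim \epsilon^{-d/2}$), and combine over the $(L/\ell)^d$ subcubes via independence. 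Either way the exponent $d/2$ in $\epsilon^{-d/2}$ is dictated by the need to balance the $\ell^{-2}$ kinetic energy cost against the $\epsilon$ energy window, which is the heart of the matter.
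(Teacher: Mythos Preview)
The paper gives no proof of this lemma at all: it is simply quoted from \cite[Eq.~(11.23)]{Kirsch:Invitation}, as the attribution in the statement already indicates. Your proposal correctly recognizes this and, beyond the citation, supplies an accurate sketch of the underlying Neumann-bracketing/Temple-inequality argument from \cite[Ch.~6 and Sect.~11]{Kirsch:Invitation}; this is more than the paper itself provides and is consistent with the cited source.
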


\noindent
As a second ingredient for the initial-scale estimate we need 
some natural decay of the Green function of $\Hh_{L}$. 

\begin{Lemma}[Combes-Thomas estimate]
	For $L\in\NN$ consider the finite-volume block operator $\Hh_{L}$ of Definition~\ref{Def:HhL}. 
	Fix $E\in\RR$ with 
	$\dist(E,\sigma(\Hh_{L}))\geq\delta$ for some $\delta\leq 1$. Then we have for all $n,m\in\mathbb Z^d$
	that
	\begin{equation}
		\| \11_{n} \mathds{G}_{\Lambda_{L}}(E) \11_{m} \| \leq\frac{4}{\delta}\, 
		\e^{- (\delta /12d) \, |n-m|}.
	\end{equation}
\end{Lemma}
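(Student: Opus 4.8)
The plan is to run the standard Combes--Thomas argument in this block setting: conjugate the resolvent by an exponential weight in the position variable, absorb the resulting perturbation by a Neumann series, and read off the exponential decay. The only observation specific to the block operator --- and the one that lets the scalar argument go through unchanged --- is that such a conjugation alters $\Hh_L$ only through the finite-range hopping part of the Laplacian, because $V$ and $B$ are multiplication operators.

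It suffices to treat $n,m\in\Lambda_L$, since otherwise $\11_n$ or $\11_m$ vanishes on $\mathcal H^2_L$ and the inequality is trivial. Fixing such an $m$ and a parameter $\alpha>0$, I would introduce the multiplication operator $U_\alpha$ on $\ell^2(\Lambda_L)$ given by $(U_\alpha\psi)(k):=\e^{\alpha|k-m|}\psi(k)$, which is bounded with bounded inverse, and lift it to $\mathds U_\alpha:=U_\alpha\oplus U_\alpha$ on $\mathcal H^2_L$. Since $V$ and $B$ commute with $U_\alpha$, one obtains $\mathds U_\alpha\Hh_L\mathds U_\alpha^{-1}=\Hh_L+\mathds W_\alpha$, where $\mathds W_\alpha:=W_\alpha\oplus(-W_\alpha)$ and $W_\alpha:=U_\alpha H_{0,L}U_\alpha^{-1}-H_{0,L}$. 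The matrix elements $W_\alpha(k,k')=\big(\e^{\alpha(|k-m|-|k'-m|)}-1\big)H_{0,L}(k,k')$ vanish on the diagonal, and for $|k-k'|=1$ the reverse triangle inequality $\big||k-m|-|k'-m|\big|\le 1$ together with $|H_{0,L}(k,k')|=1$ gives $|W_\alpha(k,k')|\le\e^\alpha-1$; since $H_{0,L}$ has at most $2d$ off-diagonal entries in each row and each column, the Schur test yields $\|\mathds W_\alpha\|=\|W_\alpha\|\le 2d(\e^\alpha-1)$.

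Next, fix $E$ with $\dist(E,\sigma(\Hh_L))\ge\delta$, so that $\|(\Hh_L-E)^{-1}\|\le\delta^{-1}$, and set $\alpha:=\delta/(12d)$. Then $\alpha\le 1/12$, the elementary bound $\e^\alpha-1\le 2\alpha$ gives $\|\mathds W_\alpha\|\le\delta/3$, and hence $\|(\Hh_L-E)^{-1}\mathds W_\alpha\|\le 1/3<1$. A Neumann series then shows that $\mathds U_\alpha(\Hh_L-E)\mathds U_\alpha^{-1}=(\Hh_L-E)+\mathds W_\alpha$ is invertible with $\big\|\mathds U_\alpha\,\mathds G_{\Lambda_L}(E)\,\mathds U_\alpha^{-1}\big\|\le\delta^{-1}/(1-1/3)=3/(2\delta)$. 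Finally, since $U_\alpha$ acts on $\delta_m$ as the scalar $\e^{\alpha|m-m|}=1$ and on $\delta_n$ as $\e^{\alpha|n-m|}$, one has $\mathds U_\alpha\11_m=\11_m$ and $\11_n\mathds U_\alpha^{-1}=\e^{-\alpha|n-m|}\11_n$, and therefore
\begin{align}
	\big\|\11_n\,\mathds G_{\Lambda_L}(E)\,\11_m\big\|
	&= \e^{-\alpha|n-m|}\,\big\|\11_n\,\mathds U_\alpha\,\mathds G_{\Lambda_L}(E)\,\mathds U_\alpha^{-1}\,\11_m\big\| \nonumber\\
	&\le \frac{3}{2\delta}\,\e^{-(\delta/12d)\,|n-m|}
	\;\le\; \frac{4}{\delta}\,\e^{-(\delta/12d)\,|n-m|},
\end{align}
which is the asserted estimate.

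The step requiring the most care is really just the bookkeeping in the second paragraph: once one observes that conjugation by $\mathds U_\alpha$ leaves the off-diagonal blocks $B$ untouched and modifies each diagonal block $H_L$ only by the finite-range operator $\pm W_\alpha$, the argument is verbatim the scalar Combes--Thomas estimate, with no block-operator difficulties at all. The one mildly delicate point is tracking the numerical constants: the choice $\alpha=\delta/(12d)$, combined with the hypothesis $\delta\le 1$ that keeps $\alpha$ in the range where $\e^\alpha-1\le 2\alpha$, is precisely what produces the constants $4/\delta$ and $\delta/(12d)$ appearing in the statement.
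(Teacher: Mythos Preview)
Your proof is correct and is precisely the ``straightforward adaptation to random block operators'' of the scalar Combes--Thomas argument that the paper refers to (patterned after \cite[Thm.~11.2]{Kirsch:Invitation}, with details deferred to \cite{Diplomarbeit}). The key observation you make explicit---that conjugation by the diagonal weight $\mathds U_\alpha = U_\alpha\oplus U_\alpha$ commutes with the multiplication operators $V$ and $B$, so only the hopping part $H_{0,L}$ is perturbed---is exactly the point that reduces the block case to the scalar one.
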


\begin{proof}
	We have patterned the lemma after \cite[Thm.\ 11.2]{Kirsch:Invitation}, and its proof 
	follows from a straightforward adaptation to random block operators of the proof there. 
	Details can be found in \cite{Diplomarbeit}. 
\end{proof}

\noindent
We are now ready for the 

\begin{proof}[Proof of Theorem~\ref{Initial-scale}]
Fix $\theta >d$, let $L\in6\NN$, set $a_{L} := \sqrt{\lambda^{2}+\beta^{2}} + L^{-1/2}$ and pick any $E \in [-a_{L},a_{L}]$. Assuming that the event
\begin{equation}
	\label{event}
 	\inf\sigma(|\Hh_{L}|) > a_{L} + L^{-1/2} 
\end{equation}
holds, then the Combes-Thomas estimate yields
\begin{equation}
	\left\|\11_{n} \mathds{G}_{\Lambda_{L}}(E) \11_{m} \right\| 
	\leq 4 \sqrt{L} \e^{- |n-m| /  (12d \sqrt{L})}
	\leq 4 \sqrt{L} \e^{- \sqrt{L} /(48d)}
\end{equation} 
for all $n\in\partial^{i}\!\Lambda_{L}$ and all $m\in \Lambda_{L/3}$.
Thus, provided $L$ is sufficiently large, the event \eqref{event} implies that
the cube $\Lambda_{L}$ is $(\theta,E)$-suitable.
Negating this implication, we conclude 
\begin{equation}
	\PP \big( \Lambda_{L} \text{ is not $(\theta,E)$-suitable}\big) 
	\leq \PP \big( \inf\sigma(|\Hh_{L}|) \le a_{L} + L^{-1/2} \big).
\end{equation}
The symmetry of the spectrum and the ordering (\ref{Lifschitz:Label}) of the eigenvalues of the operators $\Hh_{L}$ and $\Hh_{L}(\beta)$ gives 
\begin{align}
	\PP \big( \Lambda_{L} \text{ is not $(\theta,E)$-suitable}\big)
	&\leq  \PP \big( \inf\sigma(|\Hh_{L}(\beta)|) \le a_{L} + L^{-1/2} \big) \nonumber\\
  &\leq  \PP \Big( \inf\sigma(|\Hh_{L}(\beta)|) \le \sqrt{(\lambda+CL^{-1/2})^2+\beta^2}\Big) \nonumber\\
  &=  \PP \big( \inf\sigma(H_{L}) \le \lambda+CL^{-1/2} \big),
\end{align}
where $C\ge1$ is some $L$-independent constant, and the equality in the last line relies on Lemma~\ref{Lemma:H beta}. The claim now follows from Lemma~\ref{Lemma:Initial}.
\end{proof}

\section*{\normalsize\large{Acknowledgment}}
	We thank Werner Kirsch for stimulating discussions and the German Research Council (Dfg) for partial financial support through Sfb/Tr 12. 


\end{document}